\newcommand{\footremember}[2]{
	\footnote{#2}
	\newcounter{#1}
	\setcounter{#1}{\value{footnote}}
}
\newcommand{\footrecall}[1]{
	\footnotemark[\value{#1}]
}
\providecommand{\keywords}[1]{\noindent \textbf{\textit{Keywords: }}#1}
\newcommand{\SSS}{{S^1}}
\newcommand{\R}{\mathbb{R}}
\newcommand{\Z}{\mathbb{Z}}
\newcommand{\N}{\mathbb{N}}
\newcommand{\Prb}{\mathbb{P}}
\newcommand{\EE}[1]{\mathbb{E}\left[#1\right]}
\newcommand{\OT}{{OT}}
\newcommand{\COT}{{C\!\!\;OT}}
\newcommand{\XC}{\mathcal{X}}
\newcommand{\DD}{{\mathcal{D}([0,1))}}
\newcommand{\CC}{{\mathcal{C}(\SSS)}}
\renewcommand{\CC}{{\mathcal{C}_0(\Interval)}}
\renewcommand{\DH}[1]{{D^H_{#1}}}
\newcommand{\LevMed}{\textup{LevMed}}
\newcommand{\Med}{{\textup{Med}}}
\newcommand{\sign}{{\textup{sign}}}
\newcommand{\cov}{{\textup{Cov}}}
\renewcommand{\Bbb}{\mathbb{B}}
\newcommand{\konvW}{\xrightarrow{\;\;\mathcal{D}\;\;}}
\newcommand{\Interval}{{[0,1)}}
\newcommand{\Var}{\textup{Var}}
\newcommand{\BL}[1]{\textup{BL}_1(#1)}
\newcommand{\norm}[1]{\left\|#1\right\|}
\newcommand{ \coloneqq}{{\,:=\,}}
\theoremstyle{plain}
\newtheorem{theorem}{Theorem}[section]
\theoremstyle{definition}
\newtheorem{remark}{Remark}
\newtheorem*{Test}{Circular optimal transport test\,}
\newtheorem{example}{Example}
\providecommand{\keywords}[1]{\textbf{\textit{Keywords: }} #1}
\date{\today}
\title{The Statistics of Circular Optimal Transport}
\author{Shayan Hundrieser \footremember{ims}{\scriptsize Institute for Mathematical
		Stochastics, University of G\"ottingen,
		Goldschmidtstra{\ss}e 7, 37077 G\"ottingen} 
	\and 
	Marcel Klatt \footrecall{ims}{}
	\and 
	Axel Munk \footrecall{ims} \footnote{\scriptsize Max Planck Institute for Biophysical
		Chemistry, Am Fa{\ss}berg 11, 37077 G\"ottingen}}
\renewenvironment{abstract}
 {\small
  \begin{center}
  \bfseries \abstractname\vspace{-.5em}\vspace{0pt}
  \end{center}
  \list{}{
    \setlength{\leftmargin}{.5cm}
    \setlength{\rightmargin}{\leftmargin}
  }
  \item\relax}
 {\endlist}
\begin{document}

\maketitle
\begin{abstract}
Empirical optimal transport (OT) plans and distances provide effective tools to compare and statistically match probability measures defined on a given ground space.
Fundamental to this are distributional limit laws and we derive a central limit theorem for the empirical OT distance of circular data.
Our limit results require only mild assumptions in general and include prominent examples such as the von Mises or wrapped Cauchy family. Most notably, no assumptions are required when data are sampled from the probability measure to be compared with, which is in strict contrast to the real line. A bootstrap principle follows immediately as our proof relies on Hadamard differentiability of the OT functional. This paves the way for a variety of statistical inference tasks and is exemplified for asymptotic OT based goodness of fit testing for circular distributions. We discuss numerical implementation, consistency and investigate its statistical power.
For testing uniformity, it turns out that this approach performs particularly well for unimodal alternatives and is almost as powerful as Rayleigh's test, the most powerful invariant test for von Mises alternatives. For regimes with many modes the circular OT test is less powerful which is explained by the shape of the corresponding transport plan.\\ 
\end{abstract}

\keywords{Optimal transport, Directional statistics,  Central limit theorem, Goodness of fit, Testing for uniformity, von Mises distribution}

\section{Introduction}

 Originally formulated by Monge \cite{monge} and later restated and generalized by Kantorovich \cite{kant1942_original} among others, the mathematical theory of optimal transport (OT) nowadays provides a fertile ground for modern research with comprehensive monographs  \cite{rachev1998massTheory, rachev1998massApplications, santambrogio2015optimal, vil03, villani2008optimal}.
OT plans and their associated distances compare probability measures while incorporating the geometry of the underlying ground space. This aspect, often neglected by typical discrepancy measures such as total variation or Kullback-Leibler divergence, has recently put OT in the spotlight of being a highly informative and effective tool for statistical data analysis and  inferential purposes \cite{del1999tests,del2005asymptotics,evans2012phylogenetic,klatt2018empirical, panaretos2019statistical, sommerfeld2018, tameling18,   zemel2019}.
 OT based data analysis for complex and high-dimensional structures
   has further been encouraged by recent computational progress \cite{cuturi13, cuturi18} paving the way for a variety of applications as diverse as genetics \cite{evans2012phylogenetic},  computational biology \cite{klatt2018empirical,Schiebinger19, Weitkamp2020},   signal processing \cite{kolouri2017optimal}, image retrieval \cite{rabin2008_CEMD_CompLocalFeatures, rubner2000}, fingerprint identification \cite{sommerfeld2018} and procrustes analysis \cite{zemel2019}, among others.
 
A key benefit of OT is its intuitive interpretation as the minimum effort of transporting mass from one distribution to another. More precisely, given two probability measures $\mu, \nu$ on a ground space $\XC$ and a cost function $c\colon \XC\times \XC \rightarrow [0,\infty)$ the OT distance between $\mu$ and $\nu$ is defined as 
\begin{equation}\label{eq:OT_generalFormulation}
  \OT(\mu, \nu) \coloneqq \inf_{\pi}\int_{\XC\times \XC}c(x,y) d\pi(x,y). 
\end{equation}
The infimum is taken over all probability measures on the product space $\XC\times \XC$ whose marginals coincide with $\mu$ and $\nu$.

In many applications the population measure $\mu$ is often not available but instead access to a finite set of independent and identically distributed (i.i.d.) random variables $X_1, \dots, X_n \sim\mu$ is given. Hence, $\mu$ is estimated by the empirical probability measure \begin{equation}
  \hat \mu_n \coloneqq \frac{1}{n}\sum_{i = 1}^{n}\delta_{X_i},\label{eq:empMeasure}
\end{equation} 
and yields the empirical plug-in estimator $\OT(\hat \mu_n, \nu)$ for the unknown population distance  $\OT(\mu, \nu)$. At this point and for simplicity, we assume $\nu$ to be known. The generalization to the two sample scenario where additionally $\nu$ is estimated is analogous (see Remark \ref{rem:TwoSamples}).
Although for computation the OT problem can be cast as a linear program, for many real-world applications the efficient computation of OT distances still is a delicate issue and the development of improved algorithmic solutions is a highly active field of research  \cite{altschuler2017near,dvurechensky2018computational,cuturi18,schmitzer_sparse_2016,Schrieber2017}. By all means, an exceptional case is given on the real line where for certain cost functions explicit solutions for OT distances exist. For instance, 
 for Euclidean costs $c(x,y) = |x-y|$ it is well-known 
 that the OT distance between two probability measures $\mu$ and $\nu$ on $(\R, \mathcal{B}(\R))$ is given by \begin{equation}
   \OT_{\R}(\mu, \nu) = \int_{-\infty}^\infty|F_{\mu}(t) - F_{\nu}(t)|dt,
\label{eq:OT_real_formula}
\end{equation}
where $F_\mu, F_\nu$ denote the respective cumulative distribution functions. Similar formulas in terms of  quantile functions $F^{-1}_\mu, F^{-1}_\nu$ exist for costs which are given by a convex function of the Euclidean distance \cite{vil03, villani2008optimal}. 
This eases the computation but also the statistical analysis of $\OT(\hat \mu_n, \nu)$ for the real line case substantially. Applications include 
 goodness of fit testing 
and other tools for inferential purposes \cite{del1999tests,del1999central, del2005asymptotics,delbarrio2019,Munk98}. The underlying distributional limit theory can become rather involved as the extreme quantiles of $F_\mu$ have to be controlled   (\eqref{eq:OT_real_formula} is a notable exception) \cite{berthet2020exact,bobkov2019one, del2000contributions,del1999tests,del1999central,Munk98}.
\begin{figure}[!t]
\begin{center}
\includegraphics[width=0.9\textwidth, trim=0 18 -18 12, clip]{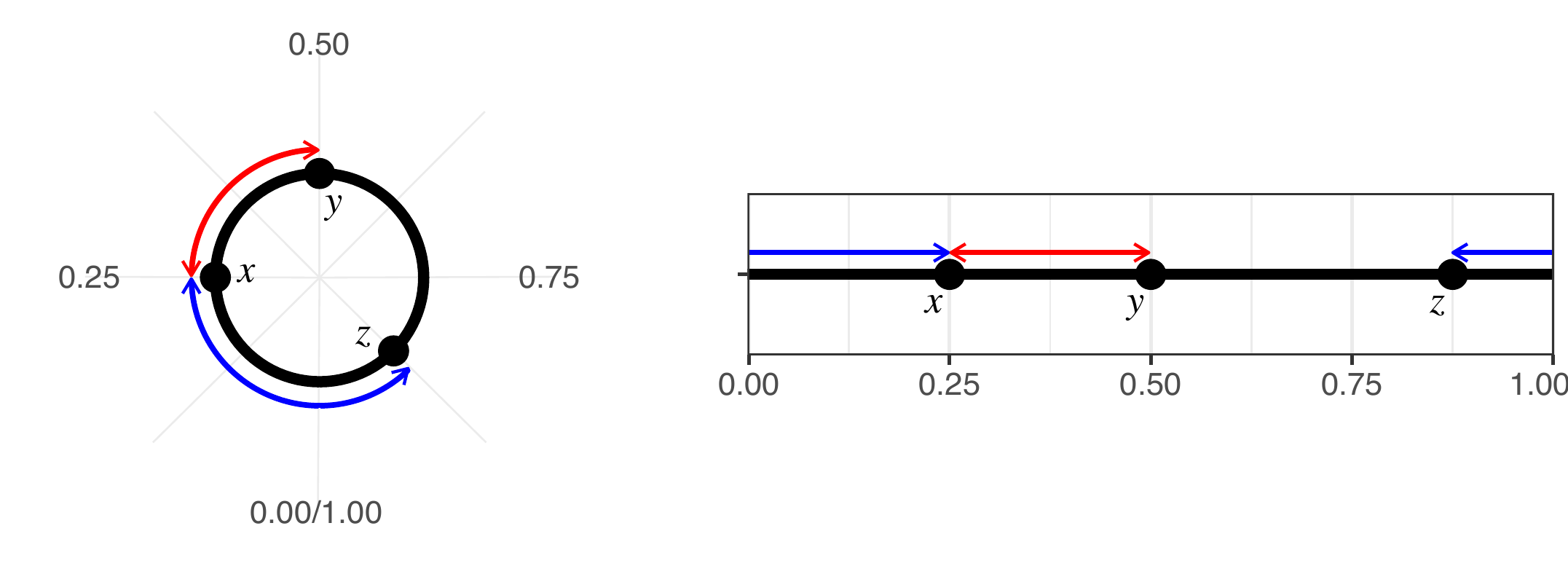}
\caption{\textbf{Geodesics on $\SSS$ and intrinsic metric $\rho_\SSS$.} 
\textbf{Left:} Shortest geodesics on $\SSS$ connecting the points $x$ and $y$ (red) as well as $x$ and $z$ (blue). 
\textbf{Right:} Same geodesics on the interval $\Interval$. The length of the geodesic connecting two points coincides with the distance of these points with respect to $\rho_\SSS$.}
\label{fig:CircularDistance}
\end{center}
\end{figure}
In this work, we investigate statistical properties of circular OT (COT) and
  derive limit distributions of the empirical COT distance extending the theory on limit laws for the real line to circular data.
This complements a considerable amount of contemporary research concerned with the analysis of circular data relevant to applications in biology \cite{batschelet1981circular,landler2018circular}, meteorology and climate research \cite{Hundrieser2020Smeariness}, environmental science \cite{kim2013three,sengupta2006} and image retrieval \cite{rabin2008_CEMD_CompLocalFeatures}, to mention a few. For a comprehensive treatment we refer to \cite{fisher1995statistical, jammalamadaka2001topics, mardia2000directional}. More recent advances  on directional statistics are summarized in \cite{Garcia2018OverviewTestsHyptersphere, Pewsey2021}.   Our work is motivated from the observation that the COT distance appears in a particular intuitive closed form when comparing and analyzing circular distributions.

In the following, we parametrize the circle $\SSS=\R/\Z$ by the set $\Interval$ equipped with the intrinsic metric also known as the geodesic distance (see Figure \ref{fig:CircularDistance}) \begin{equation}\label{eq:IntrinsicMetricCircle}
   \rho_\SSS(x,y) \coloneqq \min(|x-y|, 1 - |x-y|) \quad \forall x,y\in \SSS.
\end{equation}
\;\!\!\!Moreover, we denote by $\mu, \nu$ two Borel probability measures on $\SSS$ with respective cumulative distribution functions $F_\mu, F_\nu: \Interval \rightarrow [0,1]$ defined as \begin{equation}\label{eq:CDF}
  F_\mu(t) \coloneqq \mu([0,t]), \quad F_\nu(t) \coloneqq \nu([0,t]) \quad \forall t \in \Interval.
\end{equation}
\begin{figure}[!b]
\begin{center}
\includegraphics[width =0.9\textwidth, trim = 20 0 10 20, clip]{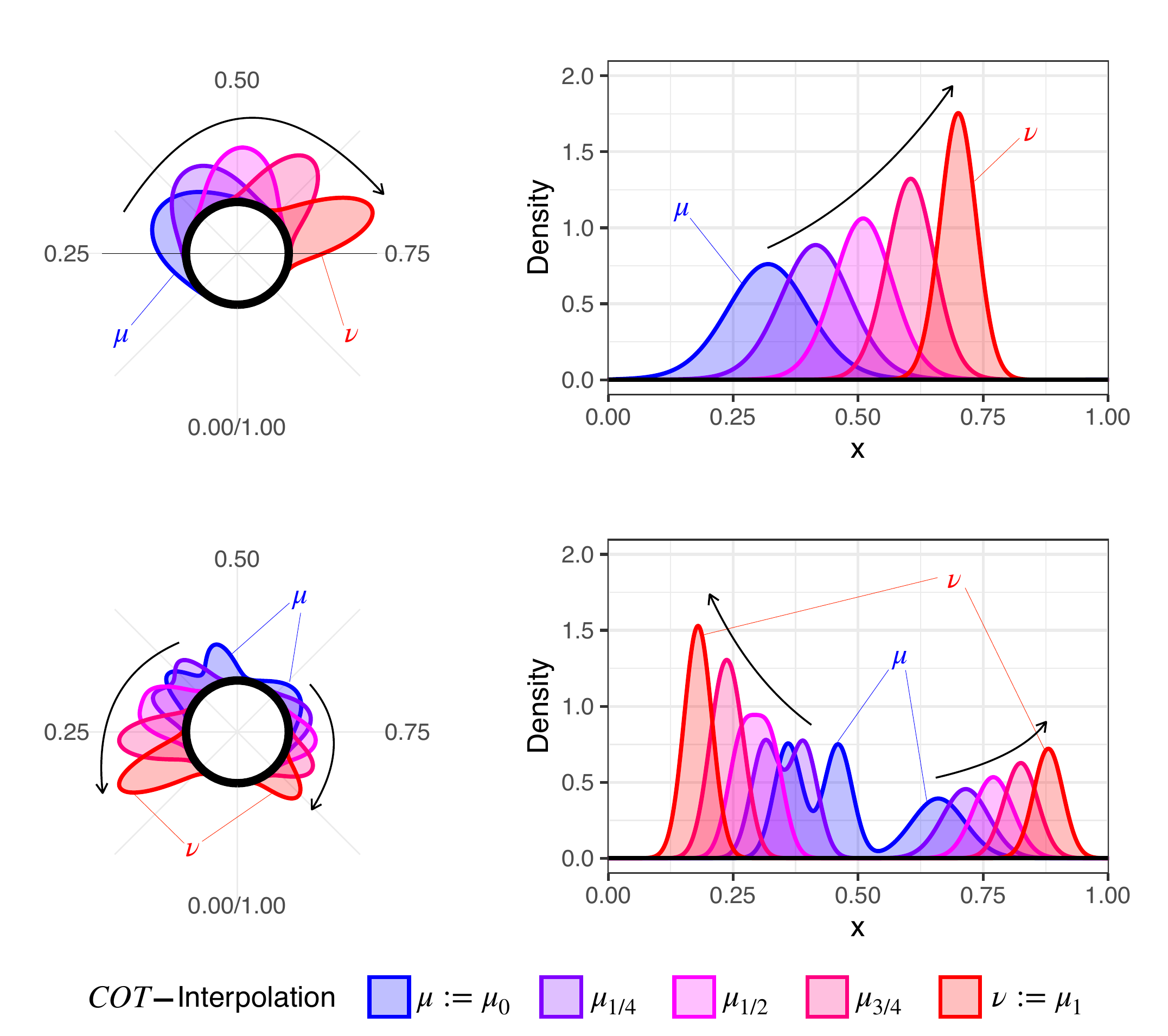}
\caption{\textbf{Circular optimal transport on $\SSS$.} Mass transportation from $\mu$ (blue) to $\nu$ (red) represented via displacement interpolation $\mu_t$ for $t \in \{0, 1/4, 1/2, 3/4,1 \}$ with respect to $\COT$ (for details see \cite{santambrogio2015optimal}) on the circle (left) and its corresponding cartesian plot (right). \textbf{Top:} Probability measures $\mu$, $\nu$ with unimodal characteristics. \textbf{Bottom:} Probability measures $\mu$, $\nu$ with multimodal characteristics.}
\label{fig:OT_Interpolation}
\end{center}
\end{figure}
\;\!\!\!Our analysis for the COT distance $\COT(\mu, \nu)$, defined as in \eqref{eq:OT_generalFormulation} with cost functional $c(x,y) = \rho_\SSS(x,y)$, relies on the explicit formula 
\begin{equation}
	\COT(\mu, \nu) = \inf_{\alpha \in \R} \int_{0}^1\big|F_\mu(t)- F_\nu(t) - \alpha\big|dt,
\label{eq:OTFormula}
\end{equation}
proven initially for discrete probability measures $\mu, \nu$ on $\SSS$ by Werman et al. \cite{werman1985distance} and later generalized to arbitrary probability measures by Delon et al. \cite{delon2010fast}. 
This formula shows some analogy to the expression for $\OT_\R$ from \eqref{eq:OT_real_formula}. The additional infimum over $\alpha$ arises from the ambiguity of how to register cumulative distribution functions on a circle. In particular, one needs to set a proper origin. Given the optimal choice of the origin for $\mu$ and $\nu$ (i.e. the minimizing element $\alpha$ in \eqref{eq:OTFormula}), the COT problem essentially reduces to the OT problem on the interval $\Interval$. For an illustration we refer to Figure \ref{fig:OT_Interpolation}.

Exploiting the representation \eqref{eq:OTFormula} in conjunction with weak convergence ($\konvW$) of the empirical process $\sqrt{n}\left(F_{\hat \mu_n} - F_\mu\right)$ towards an $F_\mu$-Brownian Bridge $\Bbb_{F_\mu}$ as $n$ tends towards infinity \cite[Theorem 14.3]{billingsley1999convergence}, we prove in Theorem~\ref{them:LimitLawEqualDistr} that
\begin{equation}
  \sqrt{n}\,\COT(\hat\mu_n, \mu) \konvW \inf_{\alpha \in \R}\int_{0}^{1} \left| \Bbb_{F_\mu}(t) - \alpha\right|dt, \quad \text{ as } n \rightarrow \infty. \label{eq:LimitLawEqualDists}
\end{equation}
Note that in \eqref{eq:LimitLawEqualDists} the data is sampled from the same probability measure $\mu = \nu$ it is compared with.
Most notably, in this situation we do not require \emph{any} assumptions on $\mu$ for this result to be valid. Our theory also holds for the two-sample case, where two i.i.d. samples stem from the same probability measure and their empirical counterparts are compared using COT distances.   In contrast, for $\mu \neq \nu$ we require additional assumptions (see (A1), (A2), (A3) in Section \ref{sec:LimitLawsOTD}) to obtain a normal limit
\begin{equation}
  \sqrt{n}\big(\COT_\SSS\left(\hat \mu_n, \nu\right) - \COT_\SSS( \mu, \nu) \big) \konvW \mathcal{N}\left(0,\sigma^2_{\mu|\nu}\right), \quad \text{ as } n \rightarrow \infty,\label{eq:LimitLawDifferentDists}
\end{equation}
where $\mathcal{N}\left(0,\sigma^2_{\mu|\nu}\right)$ denotes a centered Gaussian law with variance $\sigma^2_{\mu|\nu}$ that can be computed explicitly (Theorem \ref{them:LimitLawDifferentDistr}). Furthermore, these results are extended to bootstrap consistency (Theorem \ref{them:Bootstrap_Consistency_EqualDistr}). More precisely, 
for $\mu \neq \nu$ we prove under suitable assumptions that the naive $n$-out-of-$n$ bootstrap is consistent for  \eqref{eq:LimitLawDifferentDists}. In the setting $\mu = \nu$, we find that this bootstrap procedure fails but instead the $m$-out-of-$n$ bootstrap with $m = o(n)$ is consistent for~\eqref{eq:LimitLawEqualDists}. We emphasize that much of our asymptotic theory turns out to be simpler than the usual case on the real line as the circle is a compact manifold.

Based on these asymptotic statements, we propose the COT test (COTT)
(see Section \ref{sec:TestingGoF}) investigating 
the hypothesis that a given sample stems from a particular probability measure $\mu_0$ on $\SSS$. We employ our goodness of fit approach 
  to test for uniformity and compare  it to prominent tests by Rayleigh \cite{rayleigh1880}, Kuiper \cite{Kuiper1960TestsCR}, Watson \cite{watson1961goodness}, Rao \cite{rao1969some} as well as more recently proposed test methods by Pycke \cite{pycke2010some}. It turns out that the COTT for uniformity performs particularly well for unimodal alternatives. 
For multimodal alternatives, the COTT is less powerful, even though it outperforms other well-known tests specifically designed for unimodal alternatives. 
In short, if it is expected that the alternative distribution has only a few modes, we recommend the COTT for goodness of fit testing.

The outline of this paper is as follows.  In Section \ref{sec:CircOT}, we assess the explicit formula \eqref{eq:OTFormula} for COT, verify the existence of a minimizer $\alpha$, give an alternative characterization, and provide a computational scheme through discretization which relies on an alternative representation of the optimal choice for $\alpha$ that is based on ordering sets. More precisely, we observe that the exact quantity $COT(\mu, \nu)$ can be approximated up to an error of $\mathcal{O}(1/D)$ for $D\in \N$ with a computational effort of $\mathcal{O}(D \log(D))$ operations. This is in line with findings by \cite{delon2010fast}. Our main contribution is given in Section \ref{sec:LimitDistr} and concerned with distributional limits. We start with a short overview of required results from empirical process theory and weak convergence. Our main results on limit laws of empirical OT distances are stated in Section \ref{sec:LimitLawsOTD} and are extended in Section \ref{sec:LimitLawsBootstrapOTD} to bootstrap consistency. In Section \ref{sec:SimulationForLimitLaws}, the finite sample accuracy of our asymptotic results is analyzed in Monte Carlo Simulations. 
 Section \ref{sec:TestingGoF} is dedicated to formalizing COTT and proving its asymptotic consistency. We then examine the statistical power of COTT, compare it to other prominent tests, and give an intuitive explanation for its performance based on the nature of optimal transport. Finally, we summarize our results in Section \ref{sec:summary} and discuss open questions for future research.
 
We provide an \textsc{R}-package \cite{Rcore2020} \texttt{circularOT} for circular data analysis with OT. Besides computation of the COT distance between data samples the package includes an implementation of the COTT for uniformity as well as a bivariate bootstrap based COTT to assess whether two samples stem from the same distribution. The package is available at \emph{https://gitlab.gwdg.de/shundri/circularOT}.
Furthermore, an overview of this work with illustrations and animations is available at \emph{https://stochastik.math.uni-goettingen.de/cot}.

\section{Circular Optimal Transport: Alternative Representation and Numerical Computation}\label{sec:CircOT}
The representation \eqref{eq:OTFormula} (see \cite{delon2010fast, rabin2008_CEMD_CompLocalFeatures, werman1985distance}) reveals the COT distance with respect to the metric $\rho_\SSS$ (see \eqref{eq:IntrinsicMetricCircle}) as an  optimization problem  in only one parameter $\alpha$.
Notably, for a given measurable function $f \colon \Interval \rightarrow \R$ it follows that the mapping $\alpha \mapsto \int_\Interval |f(t) - \alpha|dt$ is convex and coercive\footnote{A function $g\colon \R \rightarrow \R$ is called coercive if $g(x)\rightarrow \infty$ as $|x| \rightarrow \infty$.}. Hence, there exists a compact set of global minimizers for $\alpha$ among which we consider the smallest element and refer to it 
as \emph{level median}
\begin{equation}\label{eq:LevelMedianDefinition}
  \LevMed (f) \coloneqq \min\left\{\arg\min_{\alpha \in \R} \int_{0}^1\big|f(t) - \alpha\big|dt\right\}.
\end{equation}

\begin{figure}[b]
\begin{center}
\includegraphics[width=0.9\textwidth, trim = 0 8 0 9, clip]{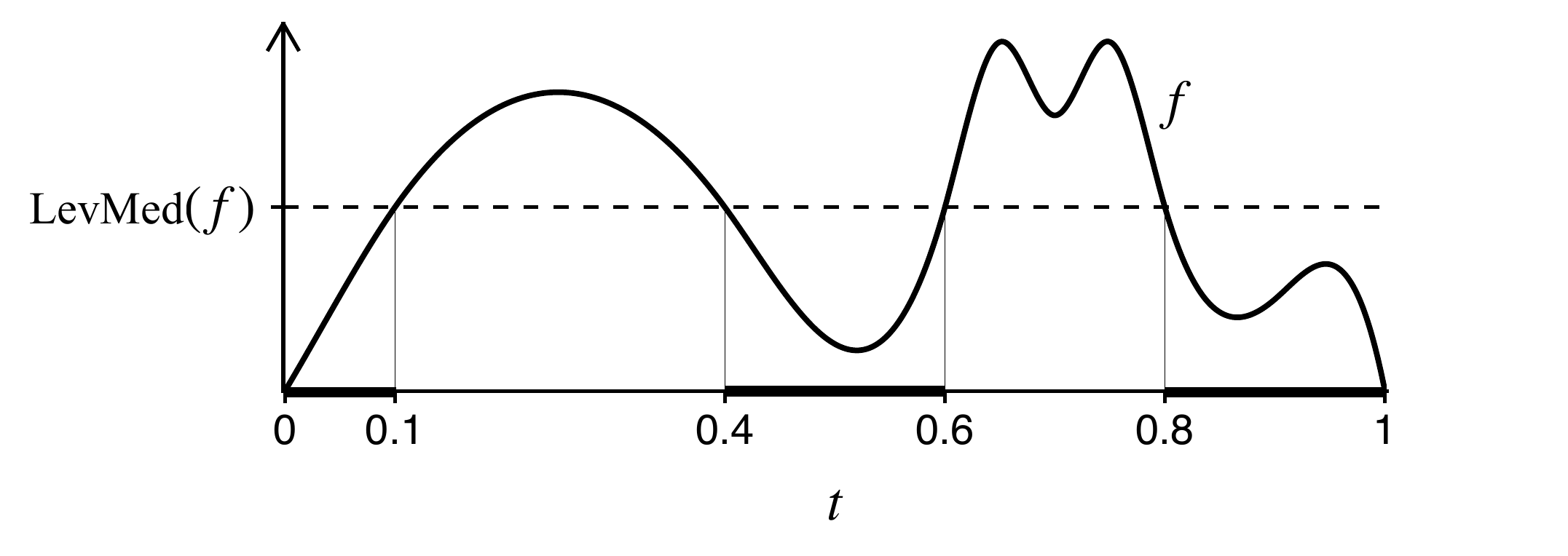}
\caption{\textbf{Level median for a function $f$ on $\Interval$.} Visualization of level median (dashed) for a function $f$ (solid). The set of elements $x$ in $\Interval$ for which $f(x)$ is smaller than the level median is depicted by a thickened x-axis. The Lebesgue measure of this set is equal to~$1/2$.}
\label{fig:LevelMedian}
\end{center}
\end{figure}

\noindent Intuitively, the level median $\LevMed(f)$  describes the median of the image of $f$, i.e. its levels with respect to Lebesgue measure. More precisely, it is shown by Bivens and Klein \cite{Median2015} that $$ \LevMed(f) = \inf\left\{t \in \R \colon \lambda(\{ x \in \Interval \colon f(x) \leq t \}) \geq 1/2\right\},$$
where $\lambda$ denotes the Lebesgue measure (see Figure \ref{fig:LevelMedian}). Let us emphasize that the level median is not to be confused with the classical (statistical) median of a continuous random variable $X$ on $\Interval$ with cumulative distribution function $F_X$ which is defined as $$\Med(X)= \inf\left\{ t \in \R \colon F_X(t) \geq 1/2\right\}.$$
Notably, $\Med(X)$ might attain any value in $[0,1)$. In contrast, for the particular case when $f=F_X$ is a cumulative distribution function, monotonicity and continuity of $F_X$ always yield that $\LevMed(F_X) = 1/2$. Concluding, the COT distance can be expressed by$$\COT(\mu, \nu) = \int_0^1 \left| F_\mu(t) - F_\nu(t) - \LevMed(F_\mu-F_\nu) \right|dt.$$
Intuitively, this formula arises by setting a proper origin for cumulative distributions functions $F_\mu$, $F_\nu$ at $\LevMed(F_\mu - F_\mu)$ and then 
employing formula \eqref{eq:OT_real_formula} for OT distances on $\R$.  

For computation of the COT distance, we define for $D \in \N$ the discretized probability measures $$ \tilde{\mu}_D \coloneqq \sum_{i = 0}^{D-1} \mu\left(\left[\frac{i}{D}, \frac{i+1}{D}\right)\right)\delta_{i/D},  \quad \tilde{\nu}_D \coloneqq \sum_{i = 0}^{D-1} \nu\left(\left[\frac{i}{D}, \frac{i+1}{D}\right)\right)\delta_{i/D}.$$
In particular, it is easy to see using monotone couplings \cite{santambrogio2015optimal} that $\COT(\mu, \tilde{\mu}_D)\leq 1/D$ and $\COT(\nu, \tilde{\nu}_D)\leq 1/D$. Furthermore, since the COT distance defines a metric on the space of probability measures on $\SSS$ \cite{vil03,villani2008optimal}, we obtain by the triangle inequality that $$ \left|\COT(\mu, \nu)  - \COT(\tilde\mu_D, \tilde\nu_D)   \right| \leq \COT(\mu, \tilde{\mu}_D) +  \COT(\nu, \tilde{\nu}_D) \leq \frac{2}{D}.$$
Hence, the quantity $\COT(\tilde\mu_D, \tilde\nu_D)$ approximates the exact COT distance between $\mu$ and $\nu$ up to an error of size $\mathcal{O}(1/D)$. In particular, it holds that 
$$ \begin{aligned} \COT(\tilde\mu_D, \tilde\nu_D) &= \int_0^1 \left| F_{\tilde \mu}(t) - F_{\tilde\nu}(t) - \LevMed(F_{\tilde\mu_D}-F_{\tilde\nu_D}) \right|dt \\
&=\frac{1}{D} \sum_{i = 1}^{D}\left| F_\mu(i/D) - F_\nu(i/D) - \LevMed\big( F_{\tilde\mu_D}- F_{\tilde\nu_D}\big)\right|,\\
\end{aligned}$$
where the level median is characterized by 
$$\LevMed\left(F_{\tilde\mu_{D}}- F_{\tilde\nu_{D}}\right)=\Med\Big(\left\{ F_\mu(i/D)- F_\nu(i/D) \colon i = 1, \dots, D \right\}\Big).$$
Consequently, the COT distance between the discretized measures $\tilde \mu_D$ and $\tilde \nu_D$ can be calculated with a computational effort of $\mathcal{O}(D \log(D))$ arithmetic operations. This rate is in line with algorithms provided by Delon et al. \cite{delon2010fast} for the computation of OT distances in case of more general cost functions. More precisely, for probability measures $\mu, \nu$ supported on $N$ points their method requires $\mathcal{O}(N |\log(\epsilon)|)$ to approximate $\COT(\mu, \nu)$ up to an error of $\epsilon>0$.

\section{Limit Distributions}\label{sec:LimitDistr}

For an i.i.d. sample $X_1, \dots, X_n \sim \mu$ we consider the empirical probability measure $\hat\mu_n$ introduced in \eqref{eq:empMeasure} and define analogously to the cumulative distribution function $F_\mu$ from \eqref{eq:CDF} the associated empirical cumulative distribution function $F_{\hat \mu_n}$. We are interested in the asymptotic fluctuation of the empirical cumulative distribution function $F_{\hat \mu_n}$ around $F_\mu$, for which we follow standard literature \cite{billingsley1999convergence}. Let $\DD$ be the Banach space of right-continuous functions on $\Interval$, for which left limits exist (c\` adl\` ag-functions), i.e.  $$ \DD \coloneqq \left\{ f \colon \Interval \rightarrow \R \colon \text{ $f$ is c\` adl\` ag, \;} \sup_{t \in \Interval}|f(t)| < \infty\right\},$$
equipped with supremum norm $\norm{f}_\infty \coloneqq \sup_{t \in \Interval}|f(t)|$.  By Donsker's Theorem it follows that the empirical process $\sqrt{n}\big(F_{\hat \mu_n}- F_{\mu}\big)$ converges weakly in $\DD$ for $n \rightarrow \infty$ towards an $F_\mu$-Brownian bridge \cite[Theorem 14.3]{billingsley1999convergence} 
\begin{equation}\label{eq:DonskerTheorem}
  \sqrt{n}\big(F_{\hat \mu_n}- F_{\mu}\big) \konvW \Bbb_{F_\mu} \coloneqq \big(\Bbb_{F_\mu(t)}\big)_{t \in \Interval} \quad \text{ in } \DD,
\end{equation}
where $\Bbb_{F_\mu} \coloneqq \big(\Bbb_{F_\mu(t)}\big)_{t \in \Interval}$ is a centered Gaussian process with covariance $$ \cov[\Bbb_{F_\mu(s)}, \Bbb_{F_\mu(t)}] = \min(F_\mu(s),F_\mu(t)) - F_\mu(s)F_\mu(t) \quad \forall s,t \in [0,1).$$
In the following, we employ the asymptotic statement \eqref{eq:DonskerTheorem} in conjunction with the continuous mapping theorem and the functional delta method. In particular, our main statements follow from that.

\subsection{Limit Laws for the Empirical Circular Optimal Transport Distance}
\label{sec:LimitLawsOTD}
For the formulation of our main results, we consider
 the one-sample case, i.e. $\COT(\mu, \nu)$ is approximated by the empirical plug-in estimator $\COT(\hat \mu_n, \nu)$. The two-sample case is analogous, see Remark \ref{rem:TwoSamples}. We start with the setting that $\mu$ is estimated by its empirical counterpart $\hat \mu_n$ in COT distance. 
\begin{theorem}\label{them:LimitLawEqualDistr} Let $\mu$ be a probability measure on $\SSS$ and denote by $\hat \mu_n$ its empirical probability counterpart based on i.i.d. samples  $X_1, \dots, X_n \sim \mu$.
	As the sample size $n$ tends to infinity it holds that $$ \sqrt{n}\,\COT(\hat\mu_n, \mu) \konvW \inf_{\alpha \in \R}\int_{0}^{1} \left| \Bbb_{F_\mu}(t) - \alpha\right|dt=\int_{0}^{1} \left| \Bbb_{F_\mu}(t) - \LevMed(\Bbb_{F_\mu})\right|dt.$$
\end{theorem}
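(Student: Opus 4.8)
The plan is to write $\sqrt{n}\,\COT(\hat\mu_n,\mu)$ as a \emph{fixed} continuous functional evaluated at the empirical process $\sqrt{n}(F_{\hat\mu_n}-F_\mu)$, and then combine Donsker's theorem \eqref{eq:DonskerTheorem} with the continuous mapping theorem. Concretely, applying the explicit formula \eqref{eq:OTFormula} with $(\mu,\nu)$ replaced by $(\hat\mu_n,\mu)$ and using positive homogeneity of $t\mapsto|t|$, I would substitute $\alpha\mapsto\alpha/\sqrt n$ in the inner optimization to get
\[
\sqrt{n}\,\COT(\hat\mu_n,\mu)=\inf_{\alpha\in\R}\int_0^1\big|\sqrt n\,(F_{\hat\mu_n}(t)-F_\mu(t))-\alpha\big|\,dt=\Phi\big(\sqrt n\,(F_{\hat\mu_n}-F_\mu)\big),
\]
where $\Phi\colon\DD\to[0,\infty)$ is defined by $\Phi(g)\coloneqq\inf_{\alpha\in\R}\int_0^1|g(t)-\alpha|\,dt$.

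Next I would record the two properties of $\Phi$ that drive the argument. First, $\Phi$ is well defined on $\DD$: for bounded $g$ the map $\alpha\mapsto\int_0^1|g(t)-\alpha|\,dt$ is convex and coercive (as discussed in Section \ref{sec:CircOT}), so the infimum is attained, and moreover $\Phi(g)=\int_0^1|g(t)-\LevMed(g)|\,dt$ directly from the definition \eqref{eq:LevelMedianDefinition} of the level median. Second, $\Phi$ is $1$-Lipschitz with respect to $\norm{\cdot}_\infty$: for every $\alpha$ one has $\int_0^1|g(t)-\alpha|\,dt\le\int_0^1|h(t)-\alpha|\,dt+\norm{g-h}_\infty$, and taking the infimum over $\alpha$ and then swapping $g$ and $h$ gives $|\Phi(g)-\Phi(h)|\le\norm{g-h}_\infty$. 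In particular $\Phi$ is continuous and, being continuous, Borel measurable.

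With this in hand the theorem follows: by Donsker's theorem \eqref{eq:DonskerTheorem}, $\sqrt n\,(F_{\hat\mu_n}-F_\mu)\konvW\Bbb_{F_\mu}$ in $\DD$, so continuity of $\Phi$ and the continuous mapping theorem yield
\[
\sqrt n\,\COT(\hat\mu_n,\mu)=\Phi\big(\sqrt n\,(F_{\hat\mu_n}-F_\mu)\big)\konvW\Phi(\Bbb_{F_\mu})=\inf_{\alpha\in\R}\int_0^1|\Bbb_{F_\mu}(t)-\alpha|\,dt .
\]
The second equality in the statement is then just the pathwise identity $\Phi(f)=\int_0^1|f(t)-\LevMed(f)|\,dt$ evaluated at $f=\Bbb_{F_\mu}$, which also certifies that the displayed right-hand side is a genuine (measurable) random variable, inheriting measurability from $\Phi$.

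I expect the only delicate point to be the rigorous handling of weak convergence on the function space $\DD$ rather than anything in the estimates above: one has to make sure the invoked Donsker theorem and the continuous mapping theorem live in the same framework. Working with the sup-norm this is the standard non-separability caveat, handled via outer expectations in the sense of Hoffmann--J\o rgensen, and the $1$-Lipschitz bound on $\Phi$ is exactly the continuity needed; if one instead works with the Skorokhod topology one must additionally check that $\Phi$ is continuous there, which reduces (via a change of variables and weak convergence of the associated time-change measures, using that each $g\in\DD$ has only countably many jumps) to the pointwise-in-$\alpha$ continuity of $g\mapsto\int_0^1|g(t)-\alpha|\,dt$ together with convexity to pass to the infimum. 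Crucially, no hypothesis on $\mu$ is used at any stage, which is why the result holds in full generality.
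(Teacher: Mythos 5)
Your proof is correct, but it takes a genuinely different (and more elementary) route than the paper. The paper first isolates the level median: it uses positive homogeneity ($\LevMed(af)=a\LevMed(f)$ for $a>0$) and the contraction property of $\LevMed$ from Anevski--Foug\`eres to show that $\LevMed$ is directionally Hadamard differentiable at the zero function with derivative $\Delta\mapsto\LevMed(\Delta)$, applies the functional delta method to get $\sqrt{n}\big(F_{\hat\mu_n}-F_\mu-\LevMed(F_{\hat\mu_n}-F_\mu)\big)\konvW \Bbb_{F_\mu}-\LevMed(\Bbb_{F_\mu})$ in $\DD$, and only then applies the continuous mapping theorem for $\int_0^1|\cdot|\,dt$. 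You instead exploit the exact positive homogeneity of the \emph{entire} map $g\mapsto\inf_\alpha\int_0^1|g-\alpha|\,dt$ via the substitution $\alpha\mapsto\alpha/\sqrt{n}$, so that $\sqrt{n}\,\COT(\hat\mu_n,\mu)=\Phi\big(\sqrt{n}(F_{\hat\mu_n}-F_\mu)\big)$ holds \emph{exactly} for every $n$, and then a single application of the continuous mapping theorem with your $1$-Lipschitz $\Phi$ finishes the proof; no delta method, no Hadamard differentiability, and no external results on $\LevMed$ are needed (the identity $\Phi(g)=\int_0^1|g-\LevMed(g)|\,dt$ is immediate from the definition \eqref{eq:LevelMedianDefinition}). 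Both arguments rest on the same structural fact -- homogeneity plus Lipschitz continuity -- but yours is self-contained and shorter, while the paper's formulation via directional Hadamard differentiability is what it reuses later for the bootstrap results (Theorem \ref{them:Bootstrap_Consistency_EqualDistr}, following D\"umbgen), so that organization buys compatibility with Section \ref{sec:LimitLawsBootstrapOTD}. Your closing remark about keeping Donsker's theorem and the continuous mapping theorem in the same framework (sup-norm weak convergence with outer expectations, where your Lipschitz bound is exactly the needed continuity) is the right caveat and matches how the paper implicitly operates by citing van der Vaart and Wellner.
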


\begin{proof}
According to \cite[Lemma 3]{avenski2019LimitMonotoneArrangement} for any $f \in \DD$ and a positive constant $a>0$ it holds that $\LevMed(af) = a\LevMed(f)$. Further, \cite[Theorem 1]{avenski2019LimitMonotoneArrangement} implies that $\LevMed \colon \DD \rightarrow \R$ is a contraction. Hence, at the constant  zero-function $f_0 \equiv 0$  the 
	level median is directionally Hadamard differentiable  (see  \cite{Roemisch04} for a definition) with non-linear Hadamard  derivative $$\DH{f_0} \LevMed \colon \DD \rightarrow \R, \quad \Delta \mapsto \LevMed(\Delta).$$
		As a consequence, by the functional delta method \cite{Roemisch04} in conjunction with \eqref{eq:DonskerTheorem} it follows for $n \rightarrow \infty$ that $$ \sqrt{n}\Big(F_{\hat \mu_n} - F_{\mu} - \LevMed(F_{\hat \mu_n}  - F_\mu)\Big) \konvW \Bbb_{F_\mu} - \LevMed(\Bbb_{F_\mu})\quad \text{ in } \DD.$$
		An application of the continuous mapping theorem \cite[Theorem 1.3.6]{van1996weak}  for the continuous operator  $\int_0^1|\cdot |dt \colon \DD \rightarrow \R$ yields for $n \rightarrow \infty$ that $$\sqrt{n}\,\COT(\hat\mu_n, \mu)\konvW \int_{0}^{1} \left| \Bbb_{F_\mu}(t) - \LevMed(\Bbb_{F_\mu})\right|dt.$$
		Finally, the assertion on the different representation of the limit law in terms of an infimum follows by definition of the level median.
\end{proof}

To characterize the limit law of the empirical estimator $\COT(\hat\mu_n, \nu)$ around  $\COT(\mu, \nu)$ for $\mu \neq \nu$, more care is required and we need the following assumptions. 
\begin{itemize}
	\item[\textbf{(A1)}] The probability measures $\mu, \nu$ have  a continuous density on $\SSS$.
	\item[\textbf{(A2)}] There are only finitely many positions where the slope of $(F_\mu - F_\nu)$ is zero.
	\item[\textbf{(A3)}] 
	For all intersections $\{t_1, \dots, t_N\}$ between $(F_\mu - F_\nu)$ and $\LevMed(F_\mu - F_\nu)$ \quad \quad it holds that $(F_\mu-F_\nu)'(t_i) \neq 0$ for all $i \in \{1, \dots, N\}$.   
\end{itemize}
	These assumptions ensure that the level median functional is Hadamard differentiable at  $F_\mu- F_\nu$ for perturbations given by continuous functions \cite{chernozhukov2010quantile}.  For an illustration of assumption (A3), we refer to Figure \ref{fig:AssumptionA3}. Notably, for cumulative distribution functions $F_\mu \neq F_\nu$ which can be extended analytically onto the complex plane it follows, by compactness of $\SSS$ and uniqueness theorem for analytic functions \cite{bak2010complex}, that the derivative of $(F_\mu - F_\nu)$ only coincides  with zero only finitely many times on $\Interval$. 	Hence, such analytic setting implies (A2).   In fact, many pairs of distributions on $\SSS$ fulfill all three assumptions. 
		 \begin{figure}[t]
\begin{center}
\includegraphics[width=0.9\textwidth, trim = 0 5 0 5, clip]{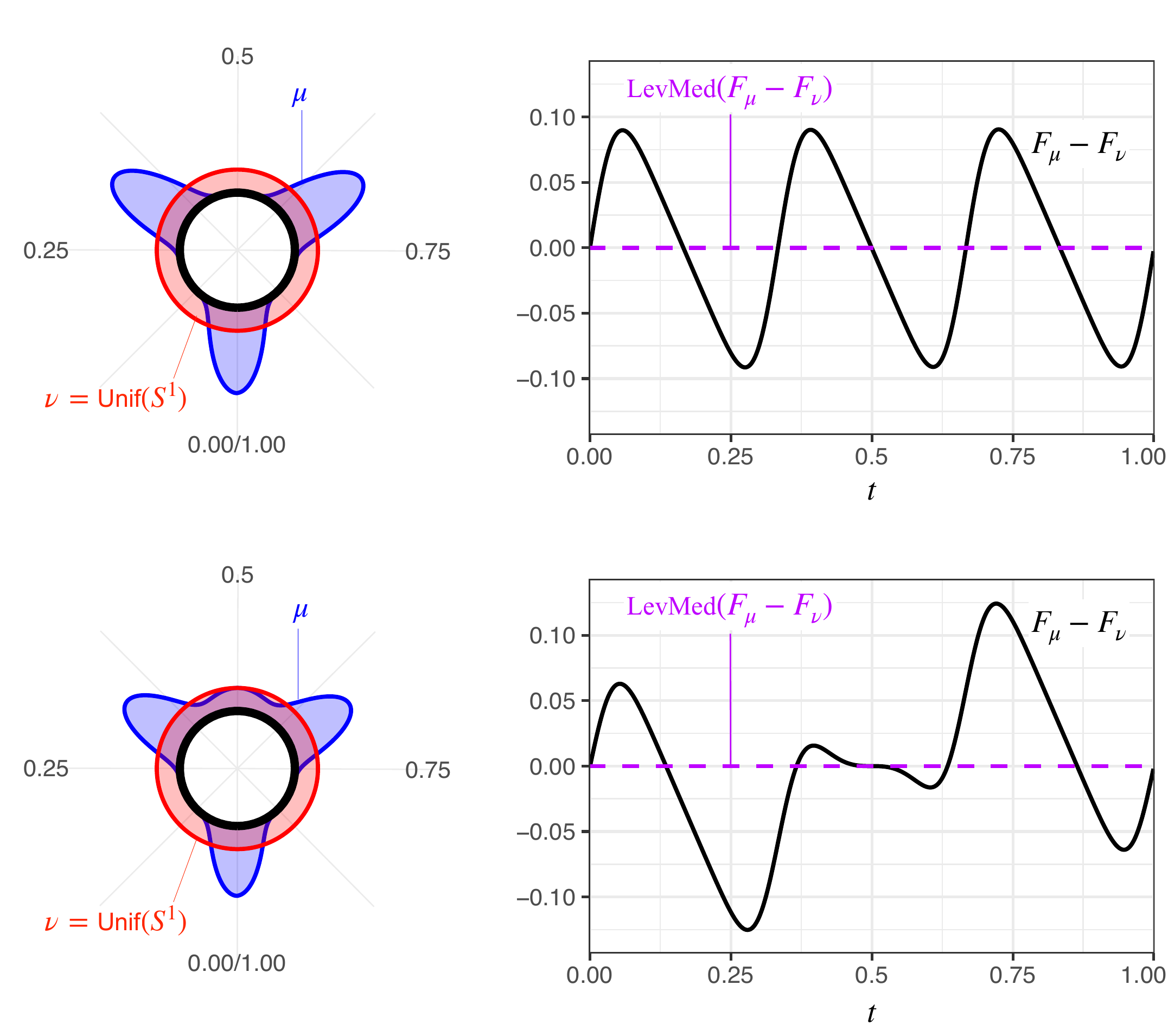}
\caption{\textbf{Example and counterexample for assumption (A3).} \textbf{Top:} Circular density plot for probability measures $\mu, \nu$ on $S^1$ (left). Difference of cumulative distribution functions $F_\mu-F_\nu$ for $\mu, \nu$ (solid, black) in a cartesian plot (right). The associated level median $\LevMed(F_\mu-F_\nu)$ (dashed, purple) is equal to~$0$. Assumption (A3) is valid. \textbf{Bottom:}  The density of $\mu$ coincides at $t = 0.5$ with the density of $\nu$ (left). Assumption (A3) is not satisfied since $(F_\mu-F_\nu)'(t) = 0$ for $t= 0.5$.}
\label{fig:AssumptionA3}
\end{center}
\end{figure}

\begin{example}\label{expl:vonMisesDistributions}
	The parametric family of von Mises distributions \cite{mardia2000directional} is characterized by the class of densities   of the form 
	$$ P_{\theta, \kappa}^{vM}(t) \coloneqq C(\kappa)\exp\big(\kappa \cos(2\pi (t -\theta) \big) \quad \forall x \in \Interval,$$
	 for $\theta \in \Interval, \kappa \in [0, \infty)$ where $C(\kappa)$ denotes the normalization constant. We note  that the density $P_{\theta, \kappa}$ can be extended analytically onto the complex plane for any choice of parameters. Hence, for von Mises distributions $\mu,\nu$ with parameters $(\theta,\kappa) \neq (\tilde\theta, \tilde\kappa)$ assumption (A2) also holds. 
	This yields that $(F_\mu - F_\nu)$ is nowhere constant on $\Interval$ which implies the strict inequalities 
	$$ \min_{t \in \Interval}\big(F_\mu(t) - F_\nu(t)\big)<\LevMed(F_\mu - F_\nu)< \max_{t \in \Interval}\big(F_\mu(t) - F_\nu(t)\big).$$
	 To verify (A3), we prove that $(F_{\mu} - F_{\nu})'(t) = 0$ is satisfied only at the maximum and the minimum of $(F_{\mu} - F_{\nu})$.
	 For this purpose, we note that the equation
	 $$ C(\kappa)\exp\big(\kappa \cos(2\pi (t -\theta) \big) = C(\tilde\kappa)\exp\big(\tilde\kappa \cos(2\pi (t - \tilde \theta) ) \big)$$
	can be equivalently written for some constants $A,C\in \R\,, B \in \Interval$ depending on  $\theta, \tilde\theta, \kappa, \tilde\kappa$ as
\begin{equation}
  \begin{aligned}
			0  &= A \cos(2\pi ( t - B)) + C.
	\end{aligned} \label{eq:VonMisesAssumptionA3}
\end{equation}
		Since at least two solutions exist for \eqref{eq:VonMisesAssumptionA3}, we obtain that $A\neq 0$ which shows that these two solutions are the only ones and verifies the validity of assumption (A3). 
\end{example}

\begin{remark}
With analogous arguments the assumptions (A1), (A2), and (A3) can also be verified for cardioid distributions or wrapped Cauchy distributions \cite{mardia2000directional}.	
\end{remark}
The main result for estimation of $\COT(\mu, \nu)$ by $\COT(\hat \mu, \hat \nu_n)$ for $\mu \neq \nu$ now reads as follows. 
		
	\begin{theorem}\label{them:LimitLawDifferentDistr}
		Let $\mu \neq \nu$ be two probability measures on $\SSS$ and suppose that assumptions (A1), (A2), and (A3) are fulfilled. Denote by $\hat \mu_n$ the empirical probability measure based on i.i.d. samples $X_1, \dots X_n \sim \mu$. 
		As the sample size $n$ tends to infinity it holds that
		\begin{equation*}
			\sqrt{n}\,\Big(\COT(\hat\mu_n, \nu) - \COT(\mu, \nu)\Big)\konvW \mathcal{N}\big(0,\sigma_{\mu|\nu}^2\big),
		\end{equation*}
		where $\mathcal{N}\big(0,\sigma_{\mu|\nu}^2\big)$ denotes a centered Gaussian distribution with variance $\sigma_{\mu|\nu}^2$.
Further, let $\{t_1, \dots, t_N\}$ be the intersections between $F_\mu-F_\nu$ and $\LevMed(F_\mu-F_\nu)$, set $\,t_0 \coloneqq 0$, $t_{N+1}\coloneqq 1$, and define $H_{\mu, \nu}(t) \coloneqq \sign\Big(F_\mu(t) - F_\nu(t) - \LevMed(F_\mu - F_\nu)\Big)$ for all $t \in \Interval$. Then the variance $\sigma^2_{\mu|\nu}$ is characterized by $$\begin{aligned}\sigma^2_{\mu|\nu} &\coloneqq \Var\left[\int_{0}^{1}H_{\mu, \nu}(t)\Bbb_{F_\mu}(t) 
			 dt \right] 
			 = \sum_{i = 0}^N \int_{t_i}^{t_{i+1}} \int_{t_i}^{t_{i+1}} F_\mu(s\wedge \tilde s) ds d\tilde s \\
			 &+ 2 \sum_{i = 1}^{N}\sum_{j=0}^{i-1}H_{\mu, \nu}\bigg(\frac{t_{i+1} + t_i}{2}\bigg)H_{\mu, \nu}\left(\frac{t_{j+1} + t_j}{2}\right)(t_{i+1} - t_{i})\int_{t_j}^{t_{j+1}}F_\mu(s)ds\\
			 &- \left( \int_0^1 H_{\mu, \nu}(s)F_\mu(s)ds \right)^2.
	\end{aligned}$$
	\end{theorem}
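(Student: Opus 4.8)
The plan is to mirror the proof of Theorem~\ref{them:LimitLawEqualDistr}, but now expanding around the fixed function $F_\mu - F_\nu$ rather than around $f_0 \equiv 0$. First I would invoke Donsker's theorem \eqref{eq:DonskerTheorem} to get $\sqrt{n}(F_{\hat\mu_n} - F_\mu) \konvW \Bbb_{F_\mu}$ in $\DD$, and observe that the sample paths of $\Bbb_{F_\mu}$ are (almost surely) continuous, so they lie in the set of perturbation directions for which, under (A1)--(A3), the level median functional $\LevMed$ is Hadamard differentiable at $F_\mu - F_\nu$; this differentiability is exactly the statement borrowed from \cite{chernozhukov2010quantile}. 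The Hadamard derivative there is linear on the relevant cone and can be written explicitly: perturbing the graph of $F_\mu - F_\nu$ near each crossing $t_i$ of the level $\LevMed(F_\mu - F_\nu)$ shifts the level by an amount governed by $1/(F_\mu - F_\nu)'(t_i)$, and the constraint that the sublevel set keeps Lebesgue measure $1/2$ pins down the derivative as a signed weighted sum of the perturbation evaluated at the $t_i$. I would record this as $\DH{F_\mu - F_\nu}\LevMed(\Delta) = \sum_i c_i \Delta(t_i)$ for suitable constants $c_i$ depending on the local slopes.

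Next I would assemble the composite map $\Phi\colon \DD \to \R$, $\Phi(f) = \int_0^1 |f(t) - \LevMed(f)|\,dt$, which satisfies $\Phi(F_{\hat\mu_n} - F_\nu)\cdot(\text{nothing}) = \COT(\hat\mu_n,\nu)$ and $\Phi(F_\mu - F_\nu) = \COT(\mu,\nu)$. Since $f \mapsto f - \LevMed(f)$ is Hadamard differentiable at $F_\mu-F_\nu$ along continuous directions, and $g \mapsto \int_0^1 |g(t)|\,dt$ is Hadamard differentiable at the nonzero function $g_0 = F_\mu - F_\nu - \LevMed(F_\mu-F_\nu)$ with derivative $\eta \mapsto \int_0^1 \sign(g_0(t))\,\eta(t)\,dt = \int_0^1 H_{\mu,\nu}(t)\,\eta(t)\,dt$ (here assumption (A2) guarantees $g_0$ vanishes only at finitely many points, so the absolute value is differentiable in the Hadamard sense along continuous directions), the chain rule for Hadamard derivatives \cite{Roemisch04} gives that $\Phi$ is Hadamard differentiable at $F_\mu - F_\nu$ with derivative
\begin{equation*}
  \Phi'_{F_\mu-F_\nu}(\Delta) = \int_0^1 H_{\mu,\nu}(t)\big(\Delta(t) - \DH{F_\mu-F_\nu}\LevMed(\Delta)\big)\,dt.
\end{equation*}
Crucially, the correction term simplifies: $\int_0^1 H_{\mu,\nu}(t)\,dt \cdot \DH{F_\mu-F_\nu}\LevMed(\Delta)$, and because $\LevMed(g_0) = \LevMed(F_\mu-F_\nu) - \LevMed(F_\mu-F_\nu)$ places the crossing level so that $H_{\mu,\nu}$ has zero Lebesgue integral — that is, $\int_0^1 H_{\mu,\nu}(t)\,dt = 0$ — the whole $\LevMed$-derivative contribution vanishes. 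Hence $\Phi'_{F_\mu-F_\nu}(\Delta) = \int_0^1 H_{\mu,\nu}(t)\,\Delta(t)\,dt$, a continuous linear functional of $\Delta$.

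Applying the functional delta method \cite{Roemisch04} then yields
\begin{equation*}
  \sqrt{n}\big(\COT(\hat\mu_n,\nu) - \COT(\mu,\nu)\big) \konvW \int_0^1 H_{\mu,\nu}(t)\,\Bbb_{F_\mu}(t)\,dt,
\end{equation*}
and since $\Bbb_{F_\mu}$ is a centered Gaussian process and the map $\Delta \mapsto \int_0^1 H_{\mu,\nu}(t)\,\Delta(t)\,dt$ is linear and (a.s.) well-defined, the limit is a centered Gaussian random variable $\mathcal{N}(0,\sigma^2_{\mu|\nu})$ with $\sigma^2_{\mu|\nu} = \Var\big[\int_0^1 H_{\mu,\nu}(t)\,\Bbb_{F_\mu}(t)\,dt\big]$. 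The remaining task is the explicit variance computation: write $\sigma^2_{\mu|\nu} = \int_0^1\int_0^1 H_{\mu,\nu}(s)H_{\mu,\nu}(\tilde s)\,\cov[\Bbb_{F_\mu(s)},\Bbb_{F_\mu(\tilde s)}]\,ds\,d\tilde s$, substitute the Brownian-bridge covariance $\min(F_\mu(s),F_\mu(\tilde s)) - F_\mu(s)F_\mu(\tilde s)$, and split the double integral over the grid cells $[t_i,t_{i+1}]$ on which $H_{\mu,\nu}$ is constant with value $H_{\mu,\nu}((t_{i+1}+t_i)/2)$; using $F_\mu(s\wedge\tilde s) = \min(F_\mu(s),F_\mu(\tilde s))$ (monotonicity of $F_\mu$), the diagonal blocks contribute $\sum_i \int_{t_i}^{t_{i+1}}\int_{t_i}^{t_{i+1}} F_\mu(s\wedge\tilde s)\,ds\,d\tilde s$, the off-diagonal blocks contribute the symmetric double sum with the factorized $\min$, and the $-F_\mu(s)F_\mu(\tilde s)$ term contributes $-(\int_0^1 H_{\mu,\nu}(s)F_\mu(s)\,ds)^2$, yielding the stated formula.

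The main obstacle is verifying that one may legitimately apply the functional delta method here, i.e.\ checking that the Hadamard differentiability of $\LevMed$ at $F_\mu - F_\nu$ (which under (A1)--(A3) holds tangentially to the subspace of continuous functions, per \cite{chernozhukov2010quantile}) is compatible with the support of the limiting process $\Bbb_{F_\mu}$ — this is fine because Brownian bridge paths are continuous — and, more subtly, confirming that the derivative of $\LevMed$ indeed drops out of the final expression. That cancellation is the conceptual heart of the argument: it is what makes the limit Gaussian (linear in $\Delta$) rather than merely a continuous transformation of a Gaussian, and it hinges on the identity $\int_0^1 H_{\mu,\nu} = 0$, which is precisely the defining property of the level median (the sublevel set $\{g_0 \le 0\}$ has Lebesgue measure $1/2$). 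Everything after that is a bookkeeping computation of a Gaussian variance.
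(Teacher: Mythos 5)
Your proposal is correct and follows essentially the same route as the paper's proof: Donsker's theorem combined with the Hadamard differentiability of $\LevMed$ at $F_\mu-F_\nu$ from \cite{chernozhukov2010quantile} (tangentially to continuous directions, matching the Brownian-bridge paths), the derivative of the $L^1$-norm at $G_{\mu,\nu}$ via the sign function since its zero set $\{t_1,\dots,t_N\}$ is Lebesgue-null, the key cancellation $\int_0^1 H_{\mu,\nu}(t)\,dt=0$ from the defining property of the level median, and the resulting linear functional of $\Bbb_{F_\mu}$ giving the Gaussian limit with the stated variance. The only cosmetic differences are that you package the argument as a single chain rule for $\Phi(f)=\int_0^1|f(t)-\LevMed(f)|\,dt$ rather than sequential delta-method steps, and the finiteness of the zero set is really a consequence of (A1)/(A3) rather than (A2); neither affects correctness.
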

	
	\begin{proof}
		Denote by $\CC\subset \DD$ the subspace of continuous functions $f\colon\Interval \rightarrow \R $  such that $f(0) = 0$ and $\lim_{x \nearrow 1}f(x) =0$.
		 Based on \cite[Proposition 2]{chernozhukov2010quantile}, it follows under the assumptions (A1), (A2), and (A3) that the level median as a mapping from $\DD$ to $\R$ is Hadamard differentiable at $(F_\mu - F_\nu)$ for perturbations $\Delta \in \CC$ where the 
		 derivative is given by $$\DH{(F_\mu - F_\nu)} \LevMed \colon \CC \rightarrow \R, \quad \Delta \mapsto \frac{\sum_{i = 1}^{N}\Delta(t_i)/(F_\mu - F_\nu)'(t_i) }{\sum_{i = 1}^{N}1/(F_\mu - F_\nu)'(t_i) }.$$
		 By Donsker's theorem it follows that $\sqrt{n}(F_{\hat \mu_n}  - F_\mu)\konvW \Bbb_{F_\mu}$ in $\DD$ \cite[Theorem 14.3]{billingsley1999convergence} where by continuity of $F_\mu$ the Brownian bridge $\Bbb_{F_\mu}$ has a version such that almost all sample paths are in $\CC$.   
		  Applying the functional delta method  \cite[Theorem 3.9.5]{van1996weak} yields for $n\rightarrow \infty$ that \begin{multline*} \sqrt{n}\left[\Big(F_{\hat \mu_n} - F_{\nu} - \LevMed(F_{\hat \mu_n}  - F_\mu)\Big) - \Big(F_{\mu} - F_{\nu} - \LevMed(F_{\mu}  - F_\mu)\Big)\right]\konvW \\
	\Bbb_{F_\mu} -  \frac{\sum_{i = 1}^{N}\Bbb_{F_\mu}(t_i)/|(F_\mu - F_\nu)'(t_i) |}{\sum_{i = 1}^{N}1/|(F_\mu - F_\nu)'(t_i)| } \quad \text{ in } \DD.
	\end{multline*}
	Moreover, the Hadamard derivative of the absolute value $| \cdot | \colon \DD \rightarrow \DD, f \mapsto |f| = (|f(t)|)_{t \in \Interval}$ at $G_{\mu,\nu} \coloneqq  \big(F_\mu - F_\nu - \LevMed(F_\mu - F_\nu)\big)$ is given by $$\begin{aligned}
		&\DH{(G_{\mu,\nu})} | \cdot |\colon \DD \rightarrow \DD, \quad \Delta \mapsto  \DH{(G_{\mu,\nu})} | \cdot | (\Delta)\;,\\		
		&\Big(\DH{(G_{\mu,\nu})} | \cdot | (\Delta)\Big)(t) = \begin{cases}
		|\Delta(t)| & \text{ if } t \in \{ t_1, \dots, t_N\},\\
		\sign\big(G_{\mu,\nu}(t)\big) \Delta(t) & \text{ else}.
	\end{cases} 
	\end{aligned} $$	
	Hence, by functional delta method for $| \cdot |$ and the continuous mapping theorem for the operator $\int_0^1 \cdot\; dt$ it follows for $n \rightarrow \infty $ that \begin{multline*}
		 \sqrt{n}\,\Big(\COT(\hat\mu_n, \nu) - \COT(\mu, \nu)\Big)	 \konvW \\
	 \int_{0}^{1} \left(\DH{(G_{\mu,\nu})}|\cdot |\left( \Bbb_{F_\mu}(\cdot) -   \frac{\sum_{i = 1}^{N}\Bbb_{F_\mu}(t_i)/|(F_\mu - F_\nu)'(t_i)| }{\sum_{i = 1}^{N}1/|(F_\mu - F_\nu)'(t_i)| }\right)\right)(t)dt.
	\end{multline*}
	By assumption (A3) the zeros of $G_{\mu, \nu}$ are exactly given by $\{t_1, \dots, t_N\}$ which is a null set for Lebesgue measure. Further, by definition of the level median it follows that 
	$\int_{0}^{1}\sign(G_{\mu, \nu}(t))dt =0$. This yields that the limit law is given by the centered Gaussian $\mathcal{N}(0,\sigma_{\mu|\nu}^2)$ as stated in the theorem. Finally, the sum-representation of $\sigma^2_{\mu|\nu}$ follows by a straight-forward computation. 
\end{proof}
	
\begin{remark} \label{rem:TwoSamples}Our results easily extend to scenarios $\hat\mu_n = \sum_{i = 1}^{n} \delta_ {X_i}$ and $\hat \nu_m = \sum_{j = 1}^{m} \delta_{ Y_{j}}$ based on i.i.d. samples $X_1, \dots, X_n\sim \mu$ and independently sampled $Y_1,\dots, Y_m \sim \nu$. It then holds for $n, m \rightarrow \infty$ with $m/(n+m) \rightarrow \delta \in (0,1)$ that  \begin{equation}
  \sqrt{\frac{nm}{n+m}}\Big(\big(F_{\hat \mu_n}- F_{\hat \nu_m}\big) - \big(F_\mu - F_\nu\big) \Big) \konvW \sqrt{\delta} \Bbb_{F_\mu} - \sqrt{1-\delta} \Bbb_{F_\nu} \quad \text{ in } \DD.\label{eq:WeakLimitTwoSampleCase}
\end{equation}
Notably, for $\mu = \nu$ the limit law is given by  $\sqrt{\delta} \Bbb_{F_\mu} - \sqrt{1-\delta} \Bbb_{F_\mu} \stackrel{\mathcal{D}}{=} \Bbb_{F_\mu} $. Hence, the limit law of the empirical COT distance for the two-sample case follows as an application of the functional delta method in conjunction of weak convergence as in \eqref{eq:WeakLimitTwoSampleCase}. More precisely, it holds for $\mu= \nu$ under no additional assumptions for $n,m \rightarrow \infty$ with $m/(n+m) \rightarrow \delta\in (0,1)$ that \begin{equation*}
	\sqrt{\frac{nm}{n +m}}\,\COT(\hat\mu_n, \hat\nu_m) \konvW 
	\int_{0}^{1} \left| \Bbb_{F_\mu}(t)- \LevMed(\Bbb_{F_\mu} )\right|dt.
\end{equation*}  For $\mu \neq \nu$ it follows under assumptions (A1), (A2), (A3) that \begin{equation*}
			\sqrt{\frac{nm}{n +m}}\,\Big(\COT(\hat\mu_n, \hat\nu_m) - \COT(\mu, \nu)\Big)\konvW \mathcal{N}\left(0, \sigma^2_{\delta, \mu, \nu}\right),
		\end{equation*}
		where the variance is given by $\sigma^2_{\delta, \mu, \nu} = \sqrt{\delta}\sigma_{\mu|\nu}^2 +\sqrt{1-\delta}\sigma_{\nu|\mu}^2$.
\end{remark}

\subsection{Limit Laws for Bootstrapped Circular Optimal Transport Distances}
\label{sec:LimitLawsBootstrapOTD}

Given a statistic $T(X_1, \dots, X_n)$ based on finitely many random variables $X_1, \dots,$ $X_n$, its distributional pattern is often difficult to compute exactly. Therefore, approximation methods are required.
 A simple and powerful procedure for this endeavor is to perform a bootstrap. In fact, whenever the statistic $T$ is Hadamard differentiable in a suitable sense, it follows that the naive $n$-out-of-$n$ bootstrap is consistent \cite[Theorem 3.9.11]{van1996weak}. However, for functionals that are only \emph{directionally} Hadamard differentiable \cite{Roemisch04}, i.e. when the derivative is non-linear, D\"umbgen \cite{dumbgen1993nondifferentiable} shows that this resampling technique generally fails to be consistent. Nevertheless, for this setting the  $m$-out-of-$n$ bootstrap for $m = o(n)$ remains consistent \cite[Proposition 2]{dumbgen1993nondifferentiable}. To formalize these results on bootstrap consistency we follow  \cite{van1996weak}.

Recalling the definition of empirical measures $\hat \mu_n= \frac{1}{n} \sum_{i =1}^{n} \delta_{X_i}$ based on an i.i.d. sample $X_1, \dots, X_n \sim \mu$, we introduce the empirical bootstrap measure  $\hat \mu_{n,n}^* = \frac{1}{n} \sum_{j =1}^{n} \delta_{X_j^*}$  based on an i.i.d. sample $X_1^*, \dots, X_n^* \sim \hat \mu_n$. Further, let $F_{\hat \mu_{n,n}^*}$ be the empirical bootstrap cumulative distribution function. Then it follows that  the bootstrap empirical process $\sqrt{n}(F_{\hat \mu_{n,n}^*} - F_{\hat \mu_{n}})$ conditioned on $X_1, \dots, X_n$  converges weakly towards the empirical process $\sqrt{n}(F_{\hat \mu_{n}} - F_{\mu})$ as $n$ tends to infinity \cite[Theorem 3.6.1]{van1996weak}. To make this statement precise we define \begin{multline*}
	\BL{\DD} \coloneqq \{ \Phi \colon \DD \rightarrow \R \colon \\
	|\Phi(f)| \leq 1, \,|\Phi(f) - \Phi(g)| \leq \norm{f-g}_\infty \text{ for all } f,g \in \DD\}
\end{multline*}
as the space of functionals  on $\DD$ bounded by one and  Lipschitz with modulus one. Likewise, we define the space $\BL{\R}$ of bounded Lipschitz functions on $\R$. With this notation, consistency of the $n$-out-of-$n$ bootstrap  means that the quantity
$$\sup_{\Phi \in \BL{\DD}}\left|\EE{\Phi\left( \sqrt{n}(F_{\hat \mu_{n,n}^*} - F_{\hat \mu_{n}})  \right)\Big| X_1, \dots, X_n} 
	- \EE{\Phi\left( \sqrt{n}(F_{\hat \mu_{n}} - F_{\mu})\right)}\right| $$
converges  in outer probability (with respect to $X_1, \dots, X_n$) towards zero as $n \rightarrow \infty$.  Our findings for the consistency on COT distances are summarized in the following theorem. The two-sample case can be dealt with analogously.

\begin{theorem}\label{them:Bootstrap_Consistency_EqualDistr}	
	For any probability measure $\mu$ on $\SSS$ it follows for $n,m \rightarrow \infty$ with $m=o(n)$ that 
	\begin{multline*}
	\sup_{\Phi \in \BL{\R}}\Big|\EE{\Phi\big( \sqrt{m} \COT(\hat \mu_{n,m}^*,\hat \mu_{n}  )\big)\big| X_1, \dots, X_n}	- \EE{\Phi\big( \sqrt{n}\COT(\hat \mu_{n},\mu)\big)}\Big| \xrightarrow{\Prb} 0.
\end{multline*} Furthermore, for probability measures $\mu, \nu$ on $\SSS$ that fulfill assumptions (A1), (A2), and (A3) it follows for $n \rightarrow \infty$  that 
\begin{multline*}
	\sup_{\Phi \in \BL{\R}}\Big|\EE{\Phi\big( \sqrt{n} \big(\COT(\hat \mu_{n,n}^*,\nu)  - \COT(\hat \mu_{n},\nu)\big)\big)\big| X_1, \dots, X_n}	\\[-0.1cm]
	- \EE{\Phi\big( \sqrt{n}\big(\COT(\hat \mu_{n},\nu) - \COT(\mu,\nu)\big)\big)}\Big| \xrightarrow{\Prb} 0.
\end{multline*}
\end{theorem}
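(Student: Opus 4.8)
The plan is to derive both bootstrap statements from the same two ingredients that powered Theorems~\ref{them:LimitLawEqualDistr} and~\ref{them:LimitLawDifferentDistr}: (i) the bootstrap analogue of Donsker's theorem for the empirical process on $\DD$, and (ii) the appropriate functional delta method for the (possibly directionally) Hadamard differentiable composition $f \mapsto \int_0^1 |f - \nu\text{-offset} - \LevMed(\cdot)|\,dt$. Concretely, I would first record that the bootstrap empirical process $\sqrt{n}\big(F_{\hat\mu_{n,n}^*} - F_{\hat\mu_n}\big)$, conditioned on $X_1,\dots,X_n$, converges weakly in $\DD$ to $\Bbb_{F_\mu}$ in outer probability \cite[Theorem 3.6.1]{van1996weak}; and, for the $m$-out-of-$n$ resample, $\sqrt{m}\big(F_{\hat\mu_{n,m}^*} - F_{\hat\mu_n}\big)$ likewise converges conditionally to $\Bbb_{F_\mu}$ provided $m=o(n)$ (this is the bootstrap CLT combined with the fact that the $o(n)$ scaling kills the discrepancy between $\hat\mu_n$ and $\mu$; see \cite[Chapter 3.6]{van1996weak} and the rate bookkeeping in \cite{dumbgen1993nondifferentiable}). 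This is the workhorse convergence; everything else is a deterministic differentiability statement fed through a delta-method theorem.

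For the second (i.e. $\mu\neq\nu$) assertion, the relevant functional $\phi\colon \DD\to\R$, $g\mapsto \int_0^1 |g - F_\nu - \LevMed(g-F_\nu)|\,dt$, is \emph{fully} (linearly) Hadamard differentiable at $g=F_\mu$ under (A1)--(A3), exactly as established inside the proof of Theorem~\ref{them:LimitLawDifferentDistr}: the $\LevMed$ part has the linear derivative from \cite[Proposition~2]{chernozhukov2010quantile} and the $|\cdot|$ part has a linear derivative off the Lebesgue-null zero set $\{t_1,\dots,t_N\}$. Because the derivative is linear and continuous, the delta method for the bootstrap \cite[Theorem 3.9.11]{van1996weak} applies directly: conditionally on the sample, $\sqrt{n}\big(\phi(F_{\hat\mu_{n,n}^*}) - \phi(F_{\hat\mu_n})\big)$ converges weakly, in outer probability, to the same Gaussian law $\mathcal{N}(0,\sigma_{\mu|\nu}^2)$ that appears as the limit of $\sqrt{n}\big(\phi(F_{\hat\mu_n}) - \phi(F_\mu)\big)$ in Theorem~\ref{them:LimitLawDifferentDistr}. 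Convergence of the two conditional laws to a common limit gives convergence of the $\BL{\R}$-distance between them to zero in probability, which is the claimed display.

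For the first (i.e. general $\mu$, possibly $\mu=\nu$) assertion, the functional is $\psi\colon g\mapsto \int_0^1 |g - \LevMed(g)|\,dt$ evaluated near the \emph{origin} $g\equiv 0$ (since $F_{\hat\mu_n}-F_{\hat\mu_n}=0$ and $F_\mu - F_\mu = 0$). Here $\LevMed$ is only directionally Hadamard differentiable at $0$, with the nonlinear derivative $\Delta\mapsto \LevMed(\Delta)$, as used in Theorem~\ref{them:LimitLawEqualDistr}; correspondingly the naive $n$-out-of-$n$ bootstrap fails by \cite{dumbgen1993nondifferentiable}. The remedy is the $m$-out-of-$n$ bootstrap: invoking \cite[Proposition~2]{dumbgen1993nondifferentiable} (the delta method for directionally Hadamard differentiable maps under $m$-out-of-$n$ resampling with $m=o(n)$), together with the conditional weak convergence of $\sqrt{m}\big(F_{\hat\mu_{n,m}^*}-F_{\hat\mu_n}\big)$ to $\Bbb_{F_\mu}$, yields that $\sqrt{m}\,\COT(\hat\mu_{n,m}^*,\hat\mu_n)=\sqrt{m}\,\psi\big(F_{\hat\mu_{n,m}^*}-F_{\hat\mu_n}\big)$ converges conditionally (in outer probability) to $\int_0^1|\Bbb_{F_\mu}(t)-\LevMed(\Bbb_{F_\mu})|\,dt$, which is precisely the limit of $\sqrt{n}\,\COT(\hat\mu_n,\mu)$ from Theorem~\ref{them:LimitLawEqualDistr}. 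Again, common-limit of the two conditional laws forces the $\BL{\R}$-supremum to vanish in probability.

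The main obstacle I anticipate is the careful verification that the hypotheses of the directional delta method of \cite{dumbgen1993nondifferentiable} are genuinely met in the $m$-out-of-$n$ setting — in particular checking that $\sqrt{m}\big(F_{\hat\mu_{n,m}^*}-F_{\hat\mu_n}\big)$ converges to $\Bbb_{F_\mu}$ (and not to some bridge built on $F_{\hat\mu_n}$) uniformly enough, which requires the $m=o(n)$ condition to absorb the $O_\Prb(\sqrt{m/n})$ gap between the empirical and true distribution, and the uniform continuity / measurability side-conditions needed to pass from conditional weak convergence to convergence of the $\BL{\R}$-distance in outer probability. Once those bookkeeping points are in place, the rest is a direct quotation of the delta-method theorems composed with the continuous map $\int_0^1|\cdot|\,dt$, exactly mirroring the non-bootstrap proofs above.
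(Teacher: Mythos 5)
Your proposal is correct and follows essentially the same route the paper intends: the conditional bootstrap CLT for the empirical process (\cite[Theorem 3.6.1]{van1996weak}), the full Hadamard differentiability under (A1)--(A3) from the proof of Theorem~\ref{them:LimitLawDifferentDistr} fed into the bootstrap delta method \cite[Theorem 3.9.11]{van1996weak}, and the directional differentiability at zero from Theorem~\ref{them:LimitLawEqualDistr} combined with D\"umbgen's $m$-out-of-$n$ result \cite[Proposition 2]{dumbgen1993nondifferentiable}. The only minor imprecision is that the conditional convergence of $\sqrt{m}\big(F_{\hat\mu_{n,m}^*}-F_{\hat\mu_n}\big)$ to $\Bbb_{F_\mu}$ holds for any $m\rightarrow\infty$; the condition $m=o(n)$ is needed only where you yourself place it later, namely to control the $O_{\Prb}(\sqrt{m/n})$ discrepancy between $F_{\hat\mu_n}$ and $F_\mu$ in the directionally differentiable regime.
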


\section{Simulations}\label{sec:SimulationForLimitLaws}

In order to assess the finite sample performance of our asymptotic results, we perform  Monte Carlo simulations. More precisely, we take samples of different sizes from a uniform distribution and compare the law of the COT distances between empirical measure and population counterpart with the theoretical limit distribution. Additionally, we illustrate the consistency of the $m$-out-of-$n$ bootstrap for $m= \lceil n^{0.8} \rceil$ which satisfies $m = o(n)$. 

	\begin{figure}[h!]
\centering
\includegraphics[width = 0.9 \textwidth]{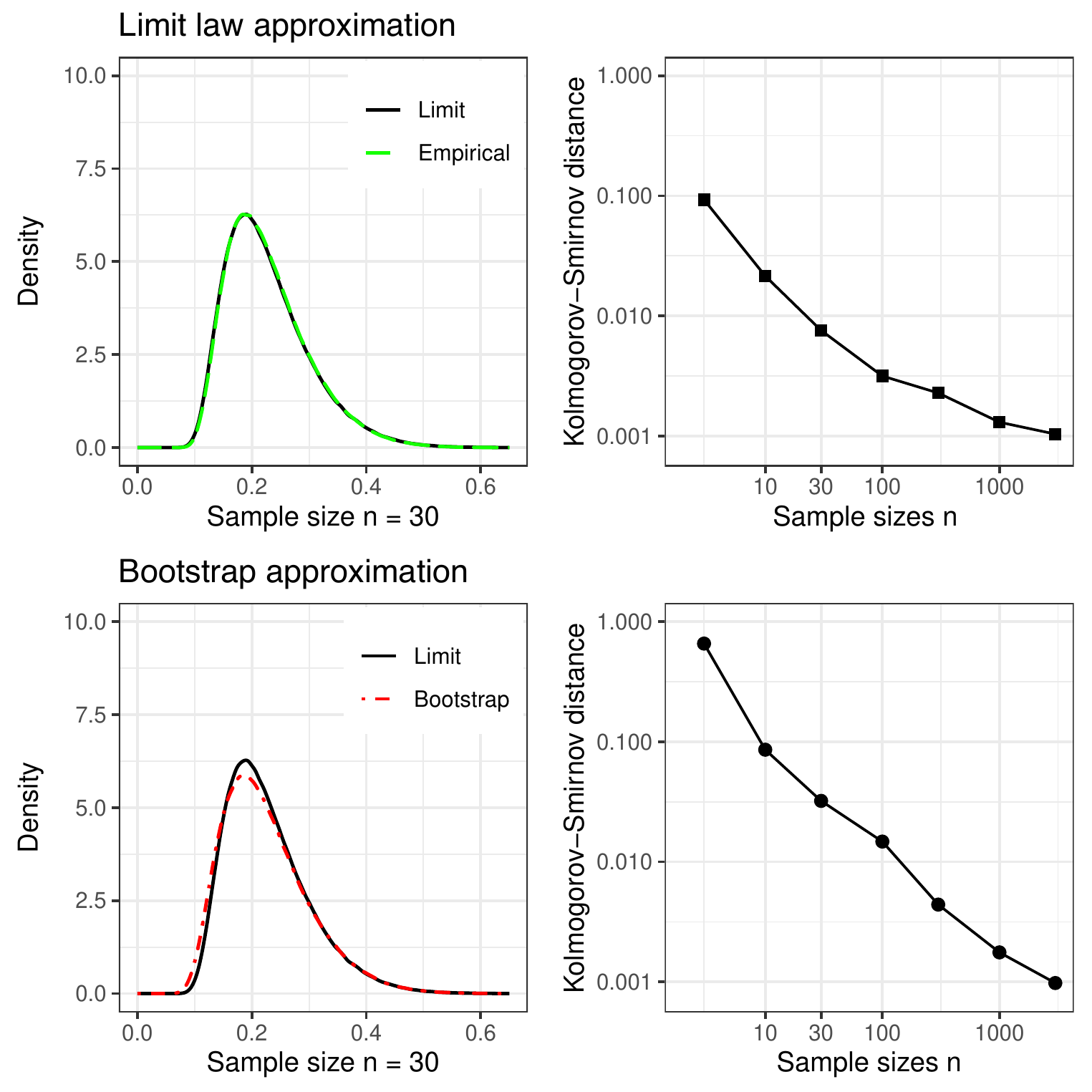}
\caption{\textbf{Accuracy of limit law for empirical and bootstrapped COT distance for $\mu = \nu = \text{Unif}(\SSS)$.} \textbf{Top:} 
Density of finite sample distribution (dashed line, green) for $\sqrt{n}\COT(\hat \mu_n, \mu)$ approximated by $10^6$ realizations each of size $n = 30$ and density of limit distribution (solid line, black) (left). The densities are approximated with a gaussian kernel and Silverman's rule \cite{silverman1986density}.
The Kolmogorov-Smirnov distance between empirical distribution and limit law for different sample sizes $n \in \{3,10,30,100,300,1000,3000\}$ on a logarithmic scale (right).
\textbf{Bottom:} Same setting as top, where instead bootstrapped COT distances (dot-dashed line, red) from an $m$-out-of-$n$ bootstrap with $m =  \lceil n^{0.8} \rceil$, i.e. for $n = 30$ and $m = 16$ on the left, are compared to the limit law (solid line, black).}
\label{fig:LimitLawApprox}    
\end{figure}

The simulations are carried out with the software \textsc{R} \cite{Rcore2020} and are depicted in Figure \ref{fig:LimitLawApprox}. For computation of COT distances on $\SSS$, we employ the discretization scheme from Section \ref{sec:CircOT} for $D= 1000$.
To generate samples from the limit law, we discretize the brownian Bridge $\Bbb_{F_\mu}$ at the locations $\{i/D \colon i \in \{1, \dots, D\}\}$, i.e.  $\Bbb_{F_{\tilde\mu_D}}(t) \coloneqq \Bbb_{F_\mu}(\lceil D t \rceil /D )$ and use the approximation $$ \int_0^1 \left| \Bbb_{F_\mu}(t) - \LevMed(\Bbb_{F_\mu}) \right|dt \\
	\approx \frac{1}{D} \sum_{i = 1}^{D}\left| \Bbb_{F_\mu}(i/D) - \LevMed\big(\Bbb_{F_{\tilde\mu_D}}\big) \right|.$$  

Our simulations in Figure \ref{fig:LimitLawApprox} show that the law of the empirical COT distance $\sqrt{n}\COT(\hat \mu_n, \mu)$ matches its limit distribution fairly well even for small sample sizes $(n = 30)$. Furthermore, the $m$-out-of-$n$ bootstrap also appears to be consistent for COT distances which is in line with our theoretical results. 
For the setting $\mu\neq \nu$ (e.g. two different von Mises distributions), we observe in our simulations a similar performance of approximating the corresponding Gaussian distribution, hence the details are omitted here.

\section{Testing for Goodness of Fit}\label{sec:TestingGoF}

Many popular statistical tests such as goodness of fit tests are based on the notion of a distance between probability measures (see e.g. Kolmogorov-Smirnov, Cramer-von Mises, Maximum mean discrepancy). Their aim is to investigate  
 whether a given sample is taken from a particular probability measure $\mu_0$. To formalize this concept, let $X_1, \dots, X_n \sim \mu$ be an i.i.d. sample. Based on the data, our aim is to test the hypothesis
$$\mathcal{H}_0\colon \mu = \mu_0.$$
 Herein, we propose the following COT based test. 

\begin{Test}[COTT]\label{test:OT-Test}
Let $\alpha\in (0,1]$ and denote $\hat\mu_n$ as the empirical measure for the sample $X_1, \dots, X_n$. We reject $\mathcal{H}_0$ with significance level $\alpha$ if $$\sqrt{n}\COT(\hat\mu_n, \mu_0) > q_{1-\alpha},$$
 where $q_{1-\alpha}$ is the $(1-\alpha)$-quantile of the distribution of the random variable $\int_{0}^1 | \Bbb_{F_{\mu_0}}(t) - \LevMed(\Bbb_{F_{\mu_0}})| dt$. 
\end{Test}
Our proposed test exhibits a natural interpretation which is based  on the OT plan for the COT problem. Intuitively, the more difficult it is to transport all probability mass from the empirical measure $\hat\mu_n$ onto the null distribution $\mu_0$, the less likely it is that the associated sample is drawn from $\mu_0$. Consequently, if $\COT(\mu_0, \mu)$ is large, we expect that our proposed test rejects with high probability. Later in Figure \ref{fig:OT_Distances} and the surrounding text a more detailed explanation is provided.

\begin{theorem}[Consistency of COTT] For any $\alpha>0$ and probability measures $\mu_0 \neq \mu_1$ on $\SSS$ it holds as $n$ tends to infinity that $$\begin{aligned}
		\Prb_{\mu_0}\left( \sqrt{n}\COT(\hat\mu_n, \mu_0) > q_{1-\alpha}\right) \rightarrow \alpha \quad \text{ and } \quad 
		\Prb_{\mu_1}\left( \sqrt{n}\COT(\hat\mu_n, \mu_0) > q_{1-\alpha}\right) \rightarrow 1.
	\end{aligned}$$
\end{theorem}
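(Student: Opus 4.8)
The plan is to split the claim into its two assertions: the level of the test (under $\mu_0$) and its power (under $\mu_1$). For the level statement, I would invoke Theorem~\ref{them:LimitLawEqualDistr} with $\mu = \mu_0$, which gives $\sqrt{n}\,\COT(\hat\mu_n,\mu_0)\konvW L$ where $L \coloneqq \int_0^1 |\Bbb_{F_{\mu_0}}(t) - \LevMed(\Bbb_{F_{\mu_0}})|\,dt$. By definition $q_{1-\alpha}$ is the $(1-\alpha)$-quantile of $L$, so the only real point is that the distribution function of $L$ is continuous at $q_{1-\alpha}$ (so that weak convergence transfers to convergence of the tail probability at that point). I would argue this from the fact that $L$ has a density, or at least no atom at $q_{1-\alpha}$: $L$ is a nonnegative random variable built from a nondegenerate Gaussian process, and one can check that its law has no atoms on $(0,\infty)$ by a standard anti-concentration argument (e.g. conditioning on the shape of the bridge and using that $\int_0^1|\Bbb_{F_{\mu_0}}(t)-\LevMed(\Bbb_{F_{\mu_0}})|dt$ is a nonconstant functional of a Gaussian measure); alternatively one notes $q_{1-\alpha}>0$ and cites absolute continuity of such suprema/integrals of Gaussian processes. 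Granting this, the Portmanteau theorem yields $\Prb_{\mu_0}(\sqrt n\,\COT(\hat\mu_n,\mu_0) > q_{1-\alpha}) \to \Prb(L > q_{1-\alpha}) = \alpha$.

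For the power statement, observe that under $\mu = \mu_1 \neq \mu_0$ the triangle inequality for the COT metric gives $\COT(\hat\mu_n,\mu_0) \geq \COT(\mu_1,\mu_0) - \COT(\hat\mu_n,\mu_1)$, and since $\COT$ metrizes weak convergence on probability measures on the compact space $\SSS$, the Glivenko--Cantelli-type fact $\COT(\hat\mu_n,\mu_1)\to 0$ almost surely holds (indeed $F_{\hat\mu_n}\to F_{\mu_1}$ uniformly a.s., and by \eqref{eq:OTFormula} $\COT(\hat\mu_n,\mu_1)\le 2\|F_{\hat\mu_n}-F_{\mu_1}\|_\infty$). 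Because $\mu_0\neq\mu_1$ implies $c \coloneqq \COT(\mu_1,\mu_0) > 0$, we get $\sqrt n\,\COT(\hat\mu_n,\mu_0) \geq \sqrt n\,(c - \COT(\hat\mu_n,\mu_1)) \to +\infty$ in probability. Since $q_{1-\alpha}$ is a fixed finite constant, $\Prb_{\mu_1}(\sqrt n\,\COT(\hat\mu_n,\mu_0) > q_{1-\alpha}) \to 1$.

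The one genuine obstacle is the continuity of the distribution function of $L$ at the quantile $q_{1-\alpha}$ — everything else is either a direct citation of Theorem~\ref{them:LimitLawEqualDistr} or an elementary triangle-inequality argument. If a fully self-contained treatment is wanted, I would supply a short lemma: the functional $f \mapsto \int_0^1 |f(t) - \LevMed(f)|\,dt$ is not constant on the support of the law of $\Bbb_{F_{\mu_0}}$, hence by the zero-one law / absolute-continuity results for Lipschitz functionals of Gaussian measures (the map is $1$-Lipschitz in $\|\cdot\|_\infty$, being a composition of the $1$-Lipschitz contraction $\LevMed$ and the $1$-Lipschitz map $f\mapsto \int_0^1|f|$) its law is absolutely continuous on $(0,\infty)$; in particular it is atomless and $q_{1-\alpha}$ is a continuity point. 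With that lemma in hand both limits follow immediately as sketched above.
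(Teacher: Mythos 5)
Your proposal is correct and follows the same overall decomposition as the paper: the level statement is obtained from Theorem~\ref{them:LimitLawEqualDistr}, and the power statement from the fact that $\COT(\hat\mu_n,\mu_0)$ stays bounded away from zero asymptotically, so that $\sqrt{n}\,\COT(\hat\mu_n,\mu_0)\rightarrow\infty$. The differences are minor but worth recording. For the power part the paper argues via the continuous mapping theorem: $F_{\hat\mu_n}\rightarrow F_{\mu_1}$ uniformly almost surely, and the map $F\mapsto\int_0^1|F-F_{\mu_0}-\LevMed(F-F_{\mu_0})|\,dt$ is Lipschitz (using the contraction property of $\LevMed$), giving $\COT(\hat\mu_n,\mu_0)\rightarrow\COT(\mu_1,\mu_0)>0$ a.s.; you instead use the metric triangle inequality together with the elementary bound $\COT(\hat\mu_n,\mu_1)\le\|F_{\hat\mu_n}-F_{\mu_1}\|_\infty$ (take $\alpha=0$ in \eqref{eq:OTFormula}; your factor $2$ is unnecessary but harmless). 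Both routes reduce to Glivenko--Cantelli plus $\COT(\mu_1,\mu_0)>0$ and are equally valid. For the level part you go beyond the paper: the paper simply states that the first assertion ``follows from Theorem~\ref{them:LimitLawEqualDistr}'', leaving implicit the fact that weak convergence yields $\Prb_{\mu_0}(\sqrt{n}\,\COT(\hat\mu_n,\mu_0)>q_{1-\alpha})\rightarrow\alpha$ only when the law of $L=\int_0^1|\Bbb_{F_{\mu_0}}(t)-\LevMed(\Bbb_{F_{\mu_0}})|\,dt$ puts no mass at $q_{1-\alpha}$. Your sketched lemma is the right kind of fix: $f\mapsto\inf_{\alpha\in\R}\int_0^1|f(t)-\alpha|\,dt$ is a convex, Lipschitz functional of the Gaussian process (with Lipschitz constant $2$ rather than $1$, since one must also account for the shift by $\LevMed$), so by standard results on convex or Lipschitz functionals of Gaussian measures its law is atomless on $(0,\infty)$. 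Note only that some non-degeneracy of $\mu_0$ is implicitly required (if $\mu_0$ is a Dirac mass then $\Bbb_{F_{\mu_0}}\equiv 0$, $q_{1-\alpha}=0$, and the level claim fails as stated); this caveat applies equally to the paper's formulation and does not count against your argument.
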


\begin{proof}
	The first assertion follows from Theorem \ref{them:LimitLawEqualDistr}. For the second assertion, we note under $\mu = \mu_1$ that $F_{\hat \mu_n} \rightarrow F_{\mu_1}$ in $\DD$ almost surely, as $n\rightarrow \infty$ \cite{van1996weak}. By Lipschitz property of the level median with respect to supremum norm \cite[Theorem~1]{avenski2019LimitMonotoneArrangement}, we see 
	for $n \rightarrow \infty$ by the continuous mapping theorem \cite{vaart_1998} that $\COT(\hat \mu_n, \mu_0) \rightarrow \COT(\mu_1, \mu_0)>0$ almost surely.
	Hence, it follows  that $\sqrt{n}\COT(\hat \mu_n, \mu_0) \rightarrow\infty$ almost surely, which implies the second claim.
\end{proof}
For the uniform distribution on $\SSS$ as well as certain von Mises distributions we include in Table \ref{tab:QuantilesVM} the associated $(1-\alpha)$-quantiles $q_{1-\alpha}$ for $\alpha \in \{0.1, 0.05, 0.01\}$. All critical values are obtained via Monte Carlo simulations using our implementation of the COTT in our $\textsc{R}$-package \texttt{circularOT}. For the cumulative distribution function of the von Mises distributions as well as random number generation we use the $\textsc{R}$-package \texttt{circular} \cite{RcircularPackage2017}. For other null distributions $\mu_0$ the quantile $q_{1-\alpha}$ may be approximated through similar Monte Carlo simulations as described in Section \ref{sec:SimulationForLimitLaws}. Alternatively, given a sample of size $n$ from $\mu_0$ an $m$-out-of-$n$ bootstrap for $m = o(n)$ may be applied to estimate the quantile $q_{1-\alpha}$.

\begin{table}[!t]
\centering
\caption{Critical values $q_{1-\alpha}$ of COT test for  $\alpha \in \{0.1, 0.05, 0.01\}$ obtained through Monte Carlo simulations for $\mu_0$  a von Mises distribution (Example \ref{eq:VonMisesAssumptionA3}) with concentration parameter $\kappa \in \{0, 0.5, 1,2,3\}$.
For $\kappa=0$ the von Mises distribution is equal to the uniform distribution on $\SSS$\!. For each $\kappa$ in total $N = 10^6$ realizations are drawn from the theoretical limit distribution from Theorem \ref{them:LimitLawEqualDistr} with a discretization scheme for $D= 1000$ as described in Section \ref{sec:SimulationForLimitLaws}.\newline}
\label{tab:QuantilesVM}

\begin{tabular}{p{1.22cm}p{1.2cm}p{1.2cm}p{1.2cm}p{1.2cm}p{0.8cm}}
\hline\noalign{\smallskip}
$\kappa$     & 0     & 0.5   & 1     & 2     & 3     \\ 
\hline\noalign{\smallskip}
$q_{0.9}$ \quad  &0.327 \quad  & 0.318 \quad & 0.295 \quad & 0.238\quad  & 0.194  \\
$q_{0.95}$ \quad &0.367 & 0.357 & 0.330  & 0.267 & 0.219  \\
$q_{0.99}$ \quad &0.447 & 0.434 & 0.403 & 0.328 & 0.271 \\
\noalign{\smallskip}\hline\noalign{\smallskip}
\end{tabular}
\end{table}

\subsection{Testing for Uniformity}
As an illustrative example, we employ 
the COTT in order to test for uniformity \cite{Garcia2018OverviewTestsHyptersphere, landler2018circular, landler2019hermans}. The respective hypothesis is $$\label{eq:HypothesisForUniformity}
  \mathcal{H}_0 \colon \mu = \text{Unif}(\SSS).$$
To investigate the performance of COTT for testing of uniformity, we compare it with other prominent proposals. Notably, some of those are specifically tailored to perform well for unimodal alternatives but lack statistical power in case of multimodal alternatives  \cite{bergin1991comparison}. To incorporate this aspect in our analysis, we first test for uniformity against von Mises distributions. In this setting, Rayleigh's test is known to be the most powerful test  \cite{watson1956construction} and therefore serves as a benchmark. Afterwards, we test against Stephens' multimodal distributions \cite{Stephens1969MultimodalDistr} which we introduce in~\eqref{eq:StephensAlternative1}.

\subsection{Power Analysis under von Mises Alternatives}
We assess the performance of COTT in case of unimodal alternatives by considering different von Mises distributions (see Example~\ref{expl:vonMisesDistributions}) with mean $\theta = 0.5$ and varying concentration parameter $\kappa\in \{0, 0.1, 0.2, \dots, 2.5\}$. Figure \ref{fig:Densities} (top plot) illustrates such densities for certain $\kappa$. 
 We generate 10,000 repetitions each of sample size $n = 30$ and compute the empirical power, i.e. the rejection probability of COTT on uniformity for significance level $\alpha = 0.05$. For comparison to other well-known tests, we also determine the empirical power of Rayleigh's test \cite{rayleigh1880},  Kuiper's test \cite{Kuiper1960TestsCR}, Watson's test \cite{watson1961goodness}, Rao's range and spacing tests \cite{rao1969some}, as well as some more recently proposed tests by Pycke \cite{pycke2010some}. 
 In accordance with the notation by Pycke \cite{pycke2010some}, we consider his  proposed tests based on the test statistics $V_{0.1}, V_{\sqrt{1/2}}, V_{\sqrt{2/3}}, V_{\sqrt{3/4}}$, and~$G$.
 
The empirical rejection probabilities of all these tests for various concentration parameters $\kappa$ are computed with the software \textsc{R} \cite{Rcore2020}. For random number generation of von Mises distributions and implementations of tests by Rayleigh, Kuiper, Watson, and Rao we use the package \texttt{circular}. Concerning the COTT we employ the implementation from our package \texttt{circularOT}. 
Results are depicted in Figure \ref{fig:TestingAgainstUnimodal} (top plot). In summary, all tests keep the level for $\kappa = 0$. As the  concentration parameter $\kappa >0$ increases, the rejection probability of each test also increases.

Rayleigh's test, the most powerful test for this setting, and Pycke's $V_{0.1}$-test perform best. These findings are in line with empirical observations by Pycke \cite{pycke2010some} as the $V_{0.1}$-test is specifically designed against unimodal alternatives. Watson's test and the COTT perform almost as well and essentially exhibit the same empirical power for different values of $\kappa$ when compared to each other. 
Let us note that the COTT can be understood as an $L^1$-version of Watson's test where the test statistic for a given sample with empirical measure $\hat \mu_n$ is given by \begin{align*}	U_n^2 &= \inf_{\alpha \in \R}\int_{0}^1\left( F_{\hat \mu_n}(x)  - x - \alpha\right)^2dx
	\\&=\int_{0}^1\left( F_{\hat \mu_n}(x)  - x - \int_{0}^1 (F_{\hat \mu_n}(y)  - y)dy  \right)^2dx.
\end{align*}
This may explain their similar performance.  All remaining tests exhibit lower rejection probabilities, in particular Rao's tests display the smallest statistical power.
\begin{figure}[t!]
\centering
  \includegraphics[width=0.9\textwidth]{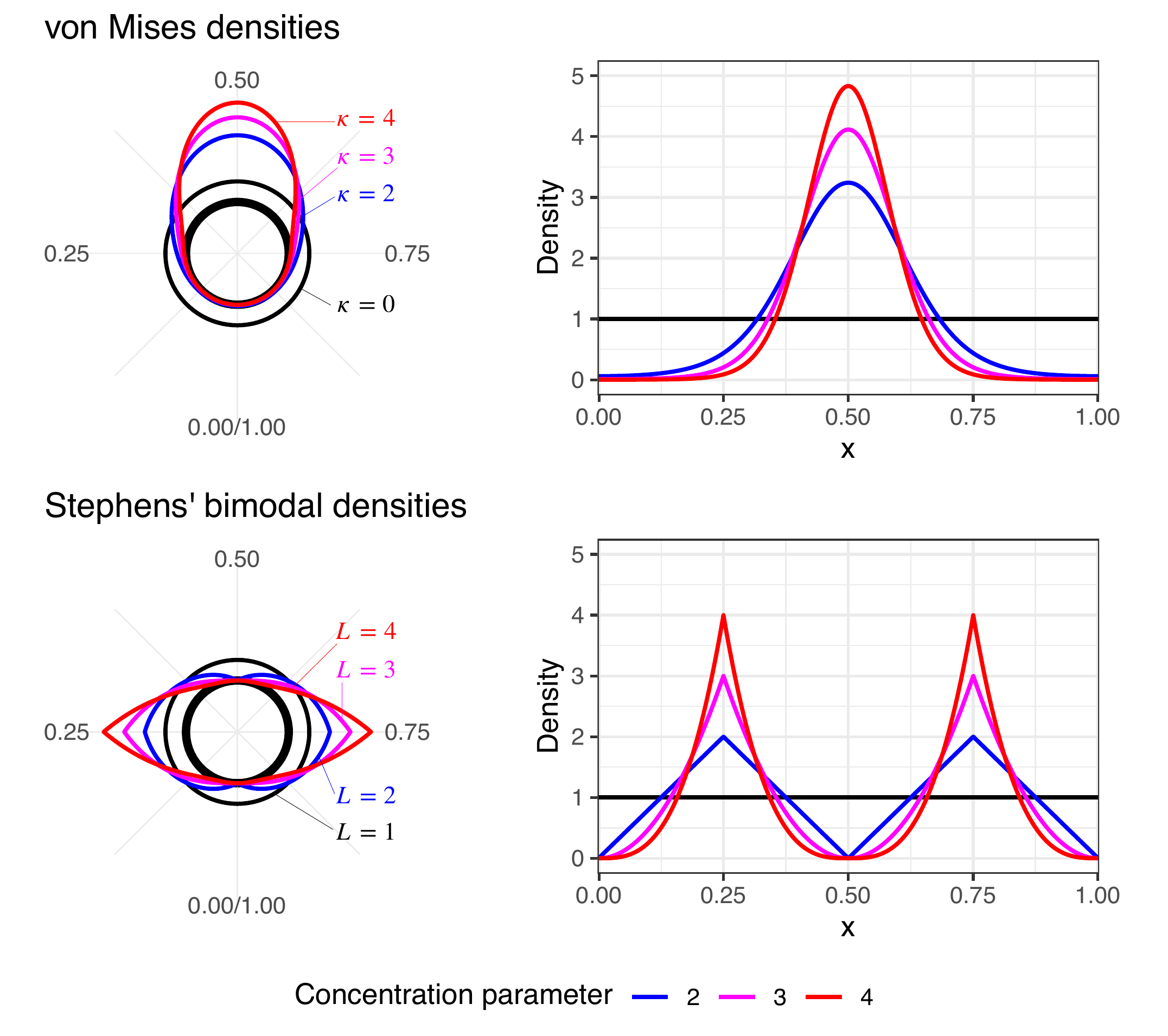}
  \caption{\textbf{Densities of von Mises and Stephens' distributions.}   \textbf{Top:}  Densities for von Mises distributions (Example \ref{expl:vonMisesDistributions}) with mean $\theta = 0.5$ for different concentration parameters $\kappa \in \{2,3,4\}$ and the density of the uniform law (black) in a circular plot (left) and a cartesian plot (right). 
  \textbf{Bottom:} Same setting as top for Stephens' bimodal distributions \eqref{eq:StephensAlternative1} with concentration parameter $L \in \{2,3,4\}$. }
   \label{fig:Densities}
\end{figure}

\begin{figure}[t!]
\centering
\includegraphics[width=0.9 \textwidth, trim = 10 0 0 0, clip]{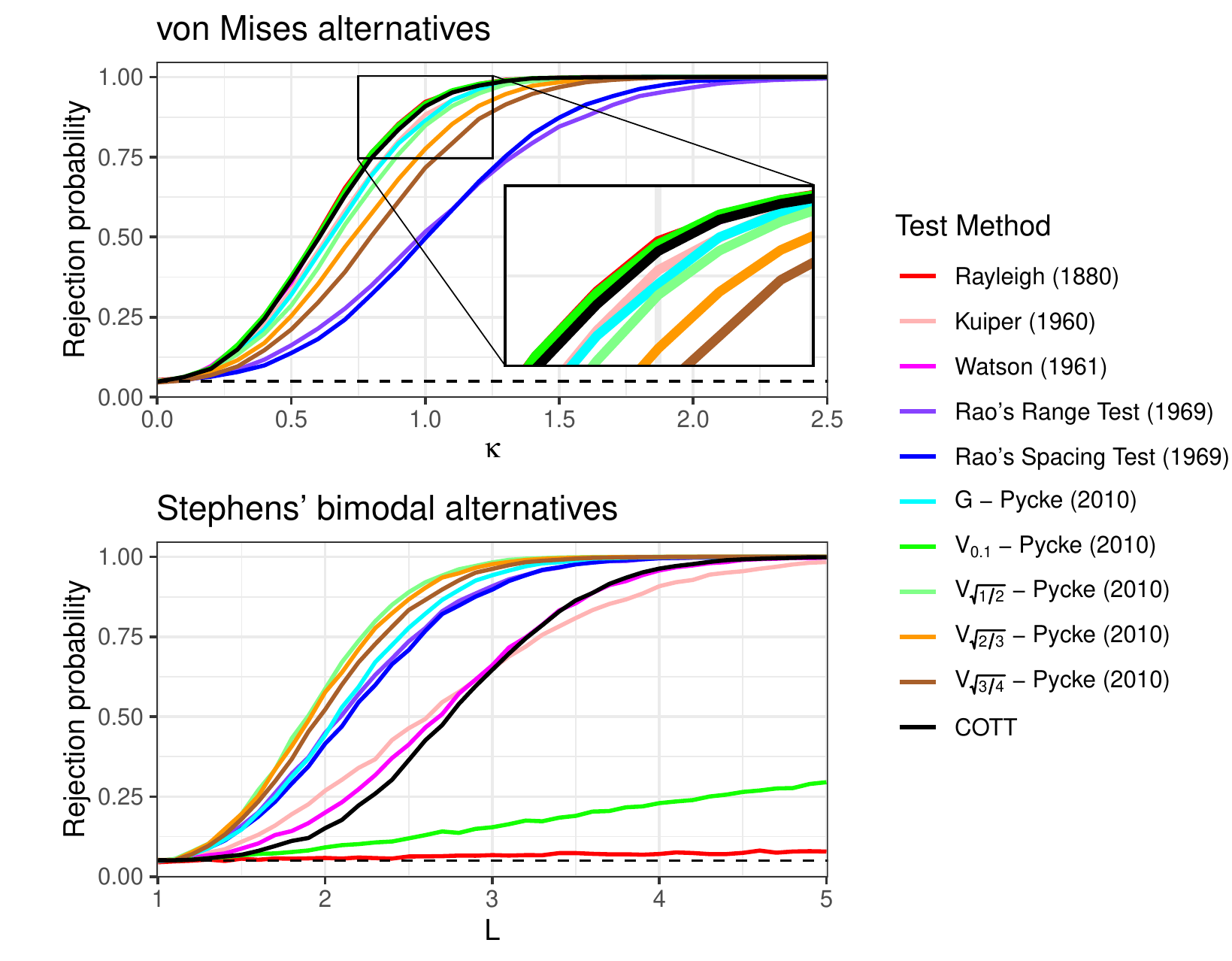}
\caption{\textbf{Statistical power of tests for uniformity under von Mises and Stephens' bimodal alternatives.} \textbf{Top:} Empirical rejection probabilities for tests on uniformity with significance level $\alpha = 0.05$ based on 10,000 repetitions of sample size $n = 30$ from von Mises distributions with mean $\gamma = 0$ and concentration parameter $\kappa\in \{0,0.1, \dots, 2.5\}$. The dashed black line represents the level $\alpha = 0.05$. 
\textbf{Bottom:} Empirical rejection probabilities with significance level $\alpha = 0.05$ where instead 10,000 repetitions of sample size $n = 30$ are taken from Stephens' bimodal distributions, i.e. $M = 2$ with concentration parameter $L\in \{1, 1.1, \dots, 5\}$.
}
\label{fig:TestingAgainstUnimodal}  
\end{figure}

\subsection{Power Analysis under Stephens' Multimodal Alternatives}

For a comparison of the different tests for uniformity in case of multimodal alternatives, let us introduce Stephens' multimodal distribution \cite{Stephens1969MultimodalDistr} characterized for $M\in \N$ and $L\geq 0$ by the density
\begin{equation}\label{eq:StephensAlternative1}
P^{S}_{M,L}(x) \coloneqq \begin{cases}
L(2Mx)^{L-1} & \text{ if } 0\leq x < \frac{1}{2M},\\
L(2M - 2Mx)^{L-1} & \text{ if } \frac{1}{2M}\leq x < \frac{1}{M},\\
P^{S}_{M,L}( x - \lfloor xM \rfloor /M )&\text{ if } \frac{1}{M} \leq x < 1.
\end{cases}
\end{equation}
The densities for the bimodal case, i.e. $M = 2$, and certain values for $L$ are shown in Figure \ref{fig:Densities} (bottom plot). 
Note that for  $L = 1$, Stephens distributions coincide for all values of $M \in \N$ with the uniform distribution on $\SSS$. In case $L>1$, the parameter $M$ describes the number of modes whereas the parameter $L$ indicates the concentration of mass towards these modes (spikiness). 

For our analysis of power for all previously stated tests, we consider 10,000 repetitions each of sample size $n = 30$ from Stephens' bimodal distributions, i.e. for fixed $M = 2$ with varying $L \in \{1, 1.1, \dots, 5\}$ and compute the respective empirical rejection probability. Each sample is tested for uniformity with significance level $\alpha = 0.05$ using the described methods. 

\begin{figure}[t!]
\centering
\includegraphics[width = 0.9 \textwidth, trim = 10 0 0 0, clip ]{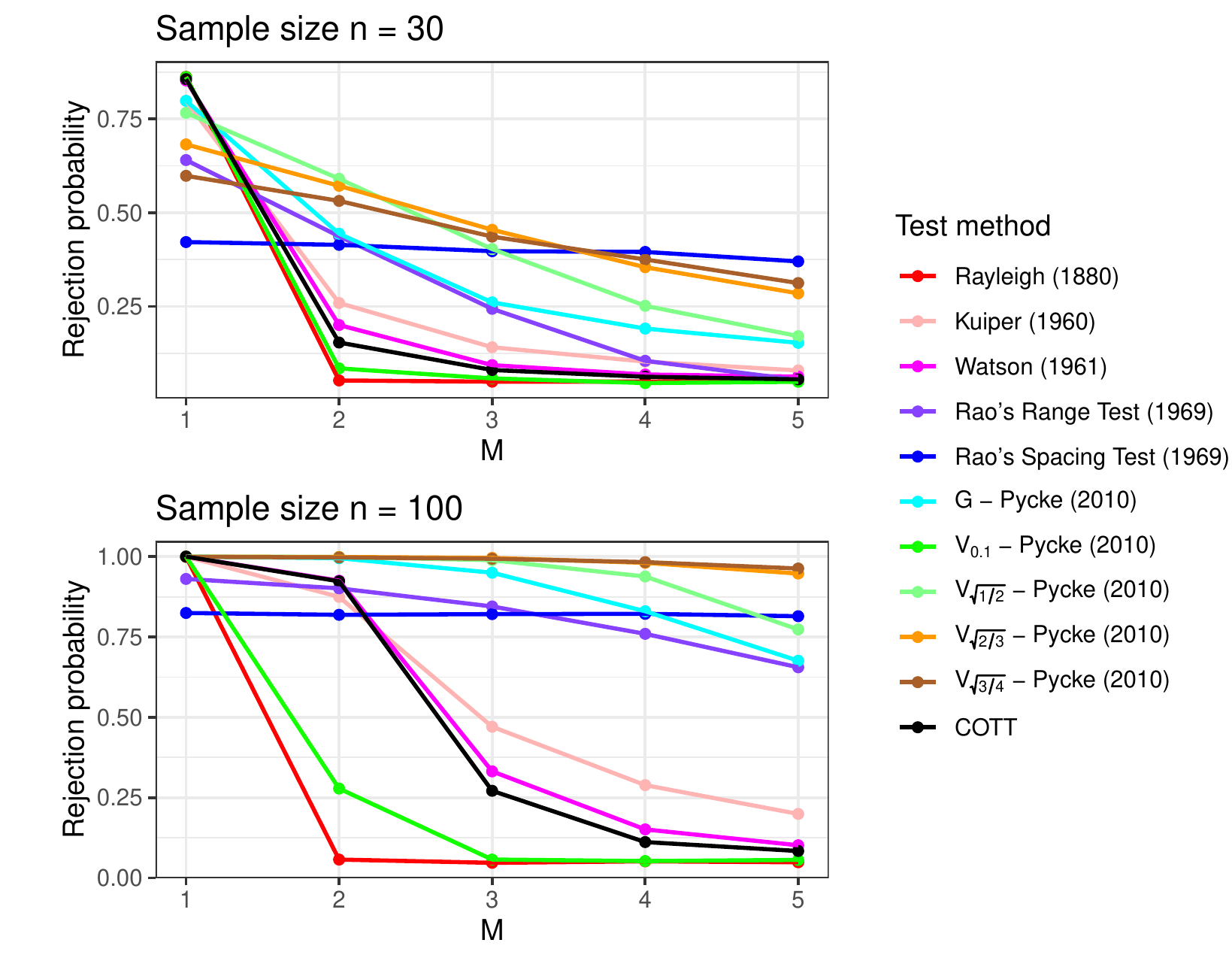}
\caption{\textbf{Statistical power of tests for uniformity under Stephens' multimodal distributions with $L =2$ and different number of modes.} Empirical rejection probabilities for tests on uniformity based on 10,000 repetitions of sample size $n = 30$ (top) and $n = 100$ (bottom) with significance level $\alpha =0.05$.
}
\label{fig:TestingAgainstMultipleModes}  
\end{figure}

The  rejection probabilities for these test scenarios are illustrated in  Figure \ref{fig:TestingAgainstUnimodal} (bottom plot). Under the null hypothesis all tests keep the level, and for increasing concentration parameter $L$ the rejection probability of each test increases. Overall, Pycke's $V_{\sqrt{1/2}}$-test performs best. Pycke's $V_{\sqrt{2/3}}$-, $V_{\sqrt{3/4}}$-, and $G$-test as well as both tests by Rao perform almost as well. In contrast, Pycke's $V_{0.1}$-test and Rayleigh's test, which are both known to perform well for unimodal alternatives,  feature by far the smallest statistical power.
 The COTT as well as Kuiper's and Watson's test all exhibit a fairly similar power and reject slightly less often than Rao's tests.
\begin{figure}[t!]
\centering
  \includegraphics[width=0.9\textwidth]{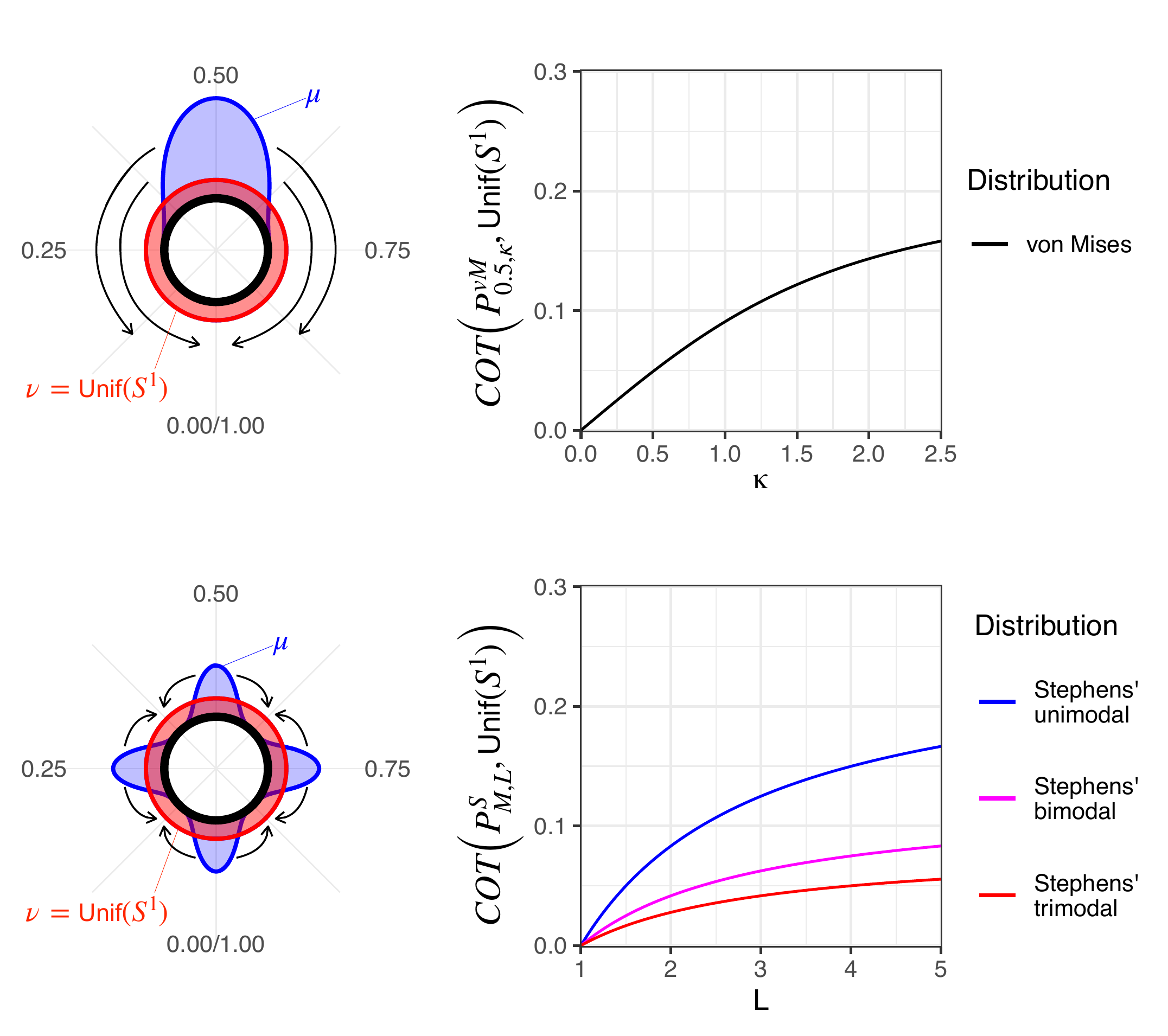}
  \caption{\textbf{Optimal mass transportation from single and multiple mode distributions to a uniform distribution, respectively.}
  \textbf{Top:} A large portion of the probability mass from $\mu$ has to be transported for a relatively long distance to match with the uniform distribution $\nu$ (left).
  COT distance between uniform and  von Mises distribution (black) with mean $\theta = 0.5$ and $\kappa \in [0,5]$ (right). Notably, for $\kappa=0$ the resulting von Mises distribution coincides with the uniform on $\SSS$. 
    \textbf{Bottom:}  Probability mass of $\mu$ spreads only locally, transport distances are overall shorter (left). 
	COT distance between uniform and Stephens' distributions for unimodal (blue), bimodal (pink), and trimodal (red) setting and concentration (right). For $L=1$ the respective Stephens distribution is equal to the uniform distribution on $\SSS$.}   
  \label{fig:OT_Distances}
\end{figure}

To further investigate the effect of multimodality on the described tests,  
we consider 10,000 repetitions of sample size $n = 30$ as well as $n = 100$ from Stephens' distributions with constant concentration parameter $L = 2$ and varying number of modes $M \in \{1, \dots, 5\}$ to test for uniformity with significance level $\alpha = 0.05$. The resulting rejection probabilities of our simulations are shown in Figure \ref{fig:TestingAgainstMultipleModes}.
Notably, for most tests with the exception of Rao's spacing test the rejection probability decreases with increasing number of modes. Whereas for samples of size  $n = 30$ the tests by Rayleigh, Kuiper, Watson, Pycke's $V_{0.1}$-test as well as our COTT perform best for unimodal alternatives, they reject  much less in case of multimodal alternatives. In contrast, the remaining tests reject less often in case of unimodal alternatives but appear to be more robust when the number of modes increases.
For sample size $n = 100$, almost all tests reject with high probability in case of the unimodal alternative. Only both of Rao's tests reject less often. Concerning bimodal alternatives,  Watson's tests and our proposed test recognize non-uniformity fairly well. They even assert with a larger probability than Kuiper's test and Rao's tests that samples from Stephens' bimodal distribution do not stem from a uniform distribution.
Rayleigh's test and  Pycke's $V_{0.1}$-test again reject with a considerably lower probability for multimodal alternatives. Overall, Pycke's $V_{\sqrt{2/3}}$- and $V_{\sqrt{3/4}}$-tests perform best against multimodal alternatives, his $V_{\sqrt{1/2}}$- and $G$- test reject slightly less often.

Let us give an intuitive explanation as to why the COTT performs less effectively for bimodal, or more generally, multimodal distributions. Given a sample from a highly concentrated multimodal distribution, it is much less costly to transform the probability mass of the empirical measure $\hat \mu_n$ to the uniform distribution $\mu_0$ on $\SSS$ as the data is already sufficiently spread. Hence, the COT distance $\COT(\hat \mu_n, \mu_0)$ between uniform and Stephens' multimodal distributions is rather small (see Figure~\ref{fig:OT_Distances} (bottom plots)). Consequently, the COTT might not be able to detect this. In contrast to that, for a sample from a unimodal distribution transporting the empirical measure to a uniform measure leads to much larger transportation costs as the data is not spread but rather concentrated towards the single mode (see Figure~\ref{fig:OT_Distances} (top plots)). As a result, the associated COT distance is likely to be larger which can be observed for von Mises and Stephens' unimodal alternative in Figure~\ref{fig:OT_Distances} (right plots), thus leading to a higher rejection probability.

\section{Discussion and Outlook}\label{sec:summary}
Our simulations show that the COTT for uniformity performs particularly well for unimodal alternatives and even exhibits almost the same power as Rayleigh's tests, the most powerful test for von~Mises alternatives. Concerning multimodal alternatives the COTT is less powerful, an observation which is in line with the basic principle of OT.  Overall, if a unimodal or multimodal distribution is expected with mainly one pronounced mode  in the alternative, we recommend applying the COTT for testing of uniformity.

As a natural extension of testing a single-element null hypothesis one may seek to use the COTT to assess the goodness of fit for families of distributions, e.g. von Mises families with estimated parameters. For this purpose, the parameters have to be estimated additionally which affects the limit law and is left open for future research. Moreover, we stress that extensions to bivariate i.i.d. samples  with marginals $\mu$ and $\nu$ can be proved analogously to our results \cite{freitag2007nonparametric}.
 Beyond this aspect it seems worthwhile to examine statistical properties of COT for other ground costs, e.g. $c(x,y) = \rho_\SSS^p(x,y)$ for $p \neq 1$. Indeed, for $p>1$ a similar reduction principle of the COT problem as in \eqref{eq:OTFormula} to an optimization problem in only one unknown is available \cite{delon2010fast}. However, the characterization of the optimal value in terms of quantile or distribution function is unknown. 

Finally, extension to higher dimensional  spheres $S^d$ for $d\geq 2$ remains a challenging task. Under no further assumptions  analyzing the asymptotic fluctuation of empirical spherical OT distances seems fairly difficult due to a lack of an explicit formula for the OT distance.  
An appropriate setting might take the geodesic distance $\rho_{S^d}$ on $S^d$ as the cost function with probability measures $\mu$, $\nu$ on $S^d$ that are rotationally invariant around a common axis $\eta\in S^d$. Parametrizing the elements $p\in S^d$ by $(\theta, \sin(\theta) q)$ with $\theta= \rho_{S^d}(\eta, p)$ and $q \in S^{d-1}$ it follows that $\mu$, $\nu$ are characterized by their cumulative distribution functions with respect to $\theta$, denoted by $F_\mu^{(\eta)}, F_\nu^{(\eta)}$, respectively. 
 As in the case of the real line a closed formula for the spherical OT distance can be proven using the monotone coupling between $\mu$ and $\nu$ along the direction $\eta$ in conjunction with the dual formulation of OT (see \cite{villani2008optimal}), resulting in
  $$OT_{S^d}(\mu, \nu) = \int_{0}^{\pi}\left|F_\mu^{(\eta)}( \theta) - F_\nu^{(\eta)}( \theta)\right|d \theta. $$
	Hence, assuming the common axis is known and considering a rotationally symmetric empirical estimator along $\eta$ for $\mu=\nu$, the limit distribution of the empirical spherical OT distance scaled with $\sqrt{n}$ is equal to an integral of the absolute value of a (time-changed) Brownian bridge. However, when the common axis is unknown and has to be estimated, the analysis of the asymptotics becomes much more involved.

\section*{Acknowledgement}
The authors gratefully acknowledge support for the DFG Research Training Group
2088 \emph{Discovering Structure in Complex Data: Statistics Meets Optimization and Inverse Problems} and the DFG Cluster of
Excellence 2067 \emph{Multiscale Bioimaging: From Molecular Machines to Networks of Excitable Cells}.

\end{document}